\documentclass[journal, onecolumn, twoside]{IEEEtran}
\normalsize
\ifCLASSINFOpdf
\else
\fi
\usepackage{setspace}
\usepackage{color}
\usepackage{cite}
\usepackage{pifont}

\usepackage{amsmath}
\usepackage{amsfonts}
\usepackage{amssymb}
\usepackage{amsthm}
\usepackage{bm}
\usepackage{mathdots}
\usepackage{cases}

\usepackage{graphicx}
\graphicspath{{../}}
\DeclareGraphicsExtensions{.pdf,.png,.jpg,.jpeg}

\usepackage{epstopdf}
\usepackage{tikz}
\usetikzlibrary{arrows}

\usepackage{booktabs}
\usepackage{multirow}
\usepackage{makecell}
\usepackage{arydshln}
\usepackage{array}
\usepackage{xtab}

\usepackage[caption=true,font=footnotesize]{subfig}

\usepackage{url}
\usepackage{algorithm}
\usepackage{algorithmic}
\usepackage{enumerate}

\allowdisplaybreaks[4]
\usepackage[colorlinks,
            linkcolor=blue,
            anchorcolor=blue,
            citecolor=blue]{hyperref}
\theoremstyle{plain}

\newtheorem{theorem}{Theorem}
\newtheorem{lemma}[theorem]{Lemma}
\newtheorem{proposition}[theorem]{Proposition}
\newtheorem{definition}[theorem]{Definition}
\newtheorem{corollary}[theorem]{Corollary}
\theoremstyle{definition}
\newtheorem{example}{Example}

\newtheorem{remark}{Remark}

\begin{document}
\title{Private Information Retrieval from Joint Systematic MDS-Coded with Non-Colluding Servers: Bounds and Constructions}

\author{Jingke~Xu,~\IEEEmembership{}
        Lirong~Shi,~\IEEEmembership{}
        Peng~Lan,~\IEEEmembership{}
        Weijun~Fang~\IEEEmembership{}
\IEEEcompsocitemizethanks{\IEEEcompsocthanksitem Jingke Xu, Lirong Shi and Peng Lan are with School of Information Science and Engineering,
                    Shandong Agricultural University, Tai'an, 271016, China (emails:xujingke@sdau.edu.cn, 2024110562@sdau.edu.cn, PengLan@sdau.edu.cn).}
\IEEEcompsocitemizethanks{\IEEEcompsocthanksitem Weijun Fang is with the State Key Laboratory of Cryptography and Digital Economy Security, Shandong University, Qingdao, 266237, China, the Key Laboratory of Cryptologic Technology and Information Security, Ministry of Education, Shandong University, Qingdao, 266237, China and the School of Cyber Science and Technology, Shandong University, Qingdao, 266237, China (email: fwj@sdu.edu.cn). {\it (Corresponding Author:  Weijun Fang)}.
}
}
\maketitle

\begin{abstract}
Consider a distributed storage system consisting of $N$ non-colluding servers that collectively store a database of $M$ files encoded using an $[N,K]$ maximum distance separable (MDS) code. A user wishes to retrieve one file privately by accessing the servers without revealing the identity of the requested file. A scheme designed for this purpose is called a joint MDS-coded private information retrieval (PIR) scheme, which was first introduced by Sun and  Tian in 2019 to break the capacity $\frac{1-K/N}{1-(K/N)^M}$ of the separate MDS-coded PIR schemes established by Banawan and Ulukus. However, the capacity of joint MDS-coded PIR remains largely unexplored.

In this paper, we study the capacity of joint MDS-coded PIR with systematic MDS array storage codes under prescribed storage patterns. Specifically, we first derive upper bounds on the capacity of joint MDS-coded PIR for $K=Mt$ and $K=Mt+1$, respectively. These bounds hold for any systematic $(N,K;\ell)$-MDS array storage codes under the storage patterns  $\mathcal{P}=\ell I_M\otimes {\bf 1}_t$ and $\mathcal{P}=\frac{\ell}{M}(MI_M\otimes {\bf 1}_t~\mid {\bf 1}_M^{\intercal})$, respectively.  Moreover, for  $N=K+t$ and $K=Mt$, our upper bound is the first to show the optimality of the joint MDS-coded PIR schemes proposed by Sun and Tian in 2019. We then construct three joint MDS-coded PIR schemes for the cases $N\leq K+t, K=Mt$; $N>K+t, K=Mt$; and $N\leq K+t, K=Mt+1$.
The proposed schemes require small file sizes and achieve higher retrieval rates: the first and third schemes exceed the capacity of separate MDS‑coded PIR schemes, while the second scheme does so when the storage rate $\frac{K}{N}>r_M$ for some  $0<r_M<\frac{M}{M+1}$. In particular, for $K=Mt$ and $N\leq K+t$, the proposed scheme achieves the derived upper bound, thereby establishing that the optimal joint MDS-coded PIR capacity under the considered storage pattern is $1-(1-\frac{1}{M})\frac{K}{N}$. Compared with capacity-achieving separate MDS-coded PIR schemes at the same storage-code rate, the proposed schemes may achieve a substantial relative retrieval-rate improvement: the maximum improvement can exceed $15\%$ when $M\geq 4$, exceed $20\%$ when $M\geq 9$, and asymptotically approach $1-2/e\approx 26.42\%$ as M increases.
\end{abstract}

\begin{IEEEkeywords}
Private information retrieval, MDS Array Codes, Capacity, Storage Pattern, Joint MDS-coded PIR
\end{IEEEkeywords}

\section{Introduction}
Introduced by Chor \textit{et al.}~\cite{CKGS95FOCS:PIR} in 1995,
 private information retrieval (PIR) has become a canonical problem  in the study of privacy issues that arise from the retrieval of information from public databases. In the classical PIR model,  a user wishes to privately retrieve
one of $M$ files from a database which contains $N$ non-colluding servers, where each server stores all $M$ files. User privacy needs to be preserved during the retrieval process, requiring that no individual server knows the identity of the desired file. Moreover, PIR has deep connections to cryptography, information theory, and coding theory.

 A central issue in PIR is minimizing the communication cost, which is usually measured by the total number of bits transmitted from the user to the servers (i.e., the upload cost) and from the servers to the user (i.e., the download cost). However, it is more common that each file is quite large in real-world applications, making the download from the servers the dominant communication cost. As a result, the upload cost can be negligible compared to the download cost, and
 the PIR problem was reformulated in~\cite{Sun&Jafar16:CapacityPIR} from the information-theoretic perspective, allowing the user to retrieve arbitrarily large files.  The efficiency of a PIR scheme is measured by the PIR {\it retrieval rate}, that is, the inverse of the download size per unit bit of the desired file, and the {\it capacity} is defined as the supremum of the PIR retrieval rate over all PIR schemes.

Since the capacity was first proposed in~\cite{Sun&Jafar16:CapacityPIR},  determining the capacity for variants of the PIR problem has attracted lots of attention in the literature. Sun and Jafar determined the capacity for non-colluding servers in~\cite{Sun&Jafar16:CapacityPIR} and the capacity for colluding servers in~\cite{Sun&Jafar16:ColludPIR}, respectively. Subsequently, the authors in~\cite{YLW20:CapaPIRCollud} studied the PIR problem under arbitrary collusion patterns and determined its capacity.
 The capacity of PIR for Byzantine  and colluding servers was  presented in~\cite{Bana&Uluk19:BCapacityPIRCoded}, where Byzantine servers are malicious servers that may prevent users from decoding the desired file by sending erroneous answers. Furthermore, many other variations and extensions of the PIR problem have also been studied, such as symmetric PIR~\cite{Sun&Jafar16:SPIR,WS19:SPIR}, multi-file retrieval~\cite{Bana&Uluk18:MPIR}, PIR with the coded databases~\cite{Bana&Uluk16:CapacityPIRCoded,HGHTKK17:PIRMDST,SKHLER19:NonMDSPIR, Xu&Zhang18:SCIS, JieLi,Sun&Jafar18:colludPIR,YZhangGGe17:Coded$T$-PIR},  PIR with small sub-packetization~\cite{
  Xu&Zhang18:SCIS,
ZTSL20:OptimalSubpacketization,  ZYQT20:OptimalSubpacketization}, PIR schemes over small field~\cite{Kale,TCOM22XW,TIT25XF},  the storage cost of PIR~\cite{T20:StoragePIR}, and PIR with side information (or caching) ~\cite{WBU19:PIRSI,KGES20:PIRSI,KARS19:PIRcaching}, a more comprehensive literature survey can be found in~\cite{UAGJ22:SuPIR}.

With the development of distributed storage systems, erasure coding has been widely adopted to improve both storage efficiency
 and failure resistance, especially MDS codes. A PIR from distributed coded storage systems is usually referred to as coded PIR. For an $[N,K]$ MDS-coded storage with  non-colluding servers, the capacity (MDS-coded PIR) was $\frac{1-K/N}{1-(K/N)^M}$, determined by Banawan and Ulukus  in~\cite{Bana&Uluk16:CapacityPIRCoded}.  To reduce the sub-packetization,  each of~\cite{Xu&Zhang18:SCIS, ZTSL20:OptimalSubpacketization, ZYQT20:OptimalSubpacketization} constructed a capacity-achieving MDS-coded PIR scheme, respectively. As pointed out in~\cite{ST19:JMDSPIR},  the storage codes in all these existing works were designed such that each file was independently encoded and stored in the databases and thus could also be recovered individually. Even when the storage code is not an MDS code, the storage system still adopted such independent encoding structure. Although the individual coding structure provides good reliability, it is not the only option in the coded PIR model. Moreover, the authors in~\cite{ST19:JMDSPIR} found that when all the files are stored jointly by using a systematic MDS
 array code in the distributed storage system, the PIR  from such system can  break the capacity~\cite{Bana&Uluk16:CapacityPIRCoded} of  separate MDS-coded PIR in some cases. Specifically,  they constructed the joint MDS-coded PIR schemes for the case of $(M=2, N=nt, K=2t)$ with $n\geq 3$ and the case of $(M\geq 2, N=t(M+1), K=tM)$. The PIR retrieval rates of  these two schemes are  slightly higher than the capacity of  separate MDS-coded PIR. However, the capacity of joint MDS-coded PIR was not theoretically analyzed in~\cite{ST19:JMDSPIR}. This motivates the following natural questions:
  \begin{itemize}
      \item [(1)] For the parameters $(M,N=K+t,K=Mt)$, what is the capacity of the joint systematic MDS-coded PIR?
      \item [(2)] Can one construct joint MDS-coded PIR schemes that achieve retrieval rates higher than the capacity of separate MDS-coded PIR for a broader range of parameters?
  \end{itemize}

\subsection{Contributions }
First, we revisit the capacity problem of joint MDS-coded PIR from a refined storage-model perspective. A joint MDS-coded PIR system should be specified not only by the parameters $(M,N,K)$, but also by a storage pattern $\mathcal P$, which describes how the $M$ files are arranged into the $K$ systematic input blocks, and by a systematic $(N,K;\ell)$ MDS array storage code $\mathcal C$, which encodes these blocks across the $N$ servers. Since the achievable retrieval rate may depend on these two storage ingredients, the capacity should be formulated accordingly.

In Section 2.2, we first introduce the {\it capacity} of joint MDS-coded PIR associated with a fixed storage pattern $\mathcal P$ and a fixed systematic storage code $\mathcal C$, thereby refining the notion in~\cite{ST19:JMDSPIR}. Based on this formulation, we focus on the capacity under a prescribed storage pattern $\mathcal P$, while allowing the storage code $\mathcal C$ to range over all compatible systematic MDS array storage codes. The main contributions of this paper are summarized as follows.

\begin{itemize}
\item[(i)] We establish upper bounds on the capacity of joint MDS-coded PIR for two prescribed storage patterns. Specifically, for $K=Mt$ with $\mathcal{P}=\ell I_M\otimes{\bf 1}_t$, and for $K=Mt+1$ with $\mathcal{P}=\frac{\ell}{M}\big(MI_M\otimes{\bf 1}_t~\mid~{\bf 1}_M^{\intercal}\big)$, we derive upper bounds on $C^{(s)}_{\mbox{\tiny JMDS-PIR}}(M,N,K,\mathcal P)$, as listed in Table~\ref{tab66}. In particular, for the case $K=Mt$, our upper bound provides the previously missing optimality proof for the joint MDS-coded PIR schemes proposed in~\cite{ST19:JMDSPIR}: these schemes attain our upper bound under the storage pattern $\mathcal{P}=\ell I_M\otimes{\bf 1}_t$, and hence are optimal in the corresponding fixed-pattern systematic MDS array storage setting; see Corollary~\ref{cor2}. The proof of the upper bounds relies on characterizing the conditional independence structure of answers, see Lemma~\ref{lem2}, and deriving the recursive inequalities~\eqref{eq12} and~\eqref{eq20} in Lemmas~\ref{lem3} and~\ref{lem4}, respectively. This approach is inspired by the method for determining the capacity of separate MDS-coded PIR in~\cite{Bana&Uluk16:CapacityPIRCoded}, but requires new arguments to handle the joint storage structure.

\item[(ii)] We construct three classes of joint systematic MDS-coded PIR schemes for the cases $N\le K+t$, $K=Mt$; $N>K+t$, $K=Mt$; and $N\le K+t$, $K=Mt+1$, respectively. In particular, for $K=Mt$ and $N\le K+t$, the proposed scheme achieves the upper bound in~\eqref{eq14}. Consequently, we determine the capacity of joint systematic MDS-coded PIR under the storage pattern $\mathcal{P}=\ell I_M\otimes{\bf 1}_t$ as
$C^{(s)}_{\mbox{\tiny JMDS-PIR}}(M,N,K,\mathcal P)
=
1-\left(1-\frac{1}{M}\right)\frac{K}{N}$,
for $K=Mt$ and $N\le K+t$; see Theorem~\ref{thm11}.

\item[(iii)] The proposed schemes exhibit improved performance compared with existing PIR schemes for MDS-coded storage, as summarized in Table~\ref{tab11}. Compared with the known joint schemes in~\cite{ST19:JMDSPIR}, we provide a unified framework for explicitly constructing joint MDS-coded PIR schemes for both $K=Mt$ and $K=Mt+1$, thereby enlarging the range of admissible parameters. For the special case $M=2$, $N=3t$, and $K=2t$, our construction reduces the required finite field size from $4t+4t\binom{3t}{2t}=O(N^{\frac{N}{3}+1})$ in~\cite{ST19:JMDSPIR} to $O(N)$. Compared with the known capacity-achieving separate MDS-coded PIR schemes~\cite{Bana&Uluk16:CapacityPIRCoded,ZTSL20:OptimalSubpacketization,ZYQT20:OptimalSubpacketization}, our proposed joint MDS-coded PIR schemes not only improve the retrieval rate but also reduce the file size. Furthermore, Propositions~\ref{lem15} and~\ref{lem16} show that our schemes achieve a substantial gain in retrieval rate over capacity-achieving separate MDS-coded PIR schemes: at least $15\%$ when $M\ge4$, at least $20\%$ when $M\ge9$, and approaching $1-2/e\approx 26.42\%$ as $M$ increases.
\end{itemize}
\begin{table}[htbp]
	\centering
	\captionsetup{justification=centering}
	\caption{Upper bounds on the capacity of joint MDS-coded PIR for the case of $K=Mt$ and the case of $K=Mt+1$, where the storage code $\mathcal{C}$ is a systematic $(N,K;\ell)$ MDS array code.}
	\resizebox{4in}{!}{
		\renewcommand{\arraystretch}{1.5}
		\begin{tabular}{|c|c|c|c|}
			\hline
			Theorems & {\rm Storage Pattern $\mathcal{P}$} & {\rm Parameters} $(M,N,K)$ & {\rm UB} \\
			\hline
			Thm. \ref{thm1} & $\ell I_M\otimes {\bf 1}_t$ & $K=Mt$ & $1-(1-\frac{1}{M})\frac{K}{N}$ \\ 
			\hline
			Thm. \ref{thm2} & $\frac{\ell}{M}(MI_M\otimes {\bf 1}_t~\mid {\bf 1}_M^{\intercal})$ & $K=Mt+1$ & $\frac{K(N-K+t)}{(K-1)(N-1)}$ \\ 
			\hline
		\end{tabular}
	}
	\label{tab66}
\end{table}
\begin{table}[htbp]
	\centering
	\captionsetup{justification=centering}
	\caption{Comparison with known PIR schemes for MDS-coded servers. Here $r=\frac{K}{N}$ is the rate of the storage code $\mathcal{C}$; ``\textit{vs.}'' compares the scheme's retrieval rate with the capacity of separate MDS-coded PIR; and ``$=$'' and ``$>$'' denote equality and strict improvement, respectively. The notation $>^*$ means that the strict improvement holds when $r>r_M$ for some $0<r_M<M/(M+1)$, where $r_M$ is determined by Proposition~\ref{lem15}(ii).}
	\resizebox{5in}{!}{
		\renewcommand{\arraystretch}{1.5}
		\begin{tabular}{|c|c|c|c|c|c|}
			\hline
			{\rm Reference} & $R_{\rm PIR}$ & \textit{vs.} & File Size $L$ & {\rm Field Size} & {\rm Parameters} $(M,N,K)$ \\
			\hline
			\cite{Bana&Uluk16:CapacityPIRCoded} & $\frac{1-r}{1-r^M}$ & $=$ & $KN^M$ & $N$ & {\rm ALL} \\ 
			\hline
			\cite{ZTSL20:OptimalSubpacketization} & $\frac{1-r}{1-r^M}$ & $=$ & ${\rm lcm}(N-K,K)$ & $N$ & {\rm ALL} \\ 
			\hline
			\cite{ZYQT20:OptimalSubpacketization} & $\frac{1-r}{1-r^M}$ & $=$ & ${\rm lcm}(N-K,K)$ & $N$ & {\rm ALL} \\ 
			\hline
			\cite{ST19:JMDSPIR} & $1-\frac{1}{n}$ & $>$ & $N-t$ & $2t(n-2)(n-1)+2t(n-1)\binom{N}{K}$ & $(2,nt,2t)$ \\ 
			\hline
			\cite{ST19:JMDSPIR} & $\frac{2}{M+1}$ & $>$ & $2t$ & $M(t+1)$ & $(M\geq 3,N=K+t,Mt)$ \\ 
			\hline
			{\bf Construction 1} & $1-(1-\frac{1}{M})r$ & $>$ & $t(N-K+t)$ & $N$ & $(M,N\leq K+t,Mt)$ \\ 
			\hline
			{\bf Construction 2} & $\frac{(N+K-t)}{NM}$ & $>^*$ & $t(N+K-t)$ & $N$ & $(M,N> K+t,Mt)$ \\ 
			\hline
			{\bf Construction 3} & $\frac{K(N-K+t)}{M((t+1)N-K)}$ & $>$ & $K \frac{{\rm lcm}(M,N-K+t)}{M}$ & $N$ & $(M,N\leq K+t,Mt+1)$ \\ 
			\hline
		\end{tabular}
	}
	\label{tab11}
\end{table}

\subsection{Organization}

The rest of this paper is organized as follows. Section 2 provides a precise description of the system model. In Sections 3 and 4, we provide upper bounds on capacity and present some constructions of joint systematic MDS-coded PIR for two classes of parameters, respectively.  Finally, Section 5 concludes  this paper.

\section{System Model}\label{sec2}
In this section, we first introduce the joint coded distributed storage system,  and then provide a formal description of the joint MDS-coded PIR model.

For any two positive integers $a\leq b$, let $[a:b]$ be the set $\{a,a+1,...,b\}$ and $[n]\triangleq[1:n]$. Throughout the paper, we usually use capital letters to denote matrices (e.g. $A,B$) and bold lowercase letters to represent vectors (e.g. ${\bf a,b}$), and $A^{\intercal}$, ${\bf a}^{\intercal}$ denote the transpose of matrix $A$ and vector ${\bf a}$, respectively. For a block vector ${\bf v}=({\bf v}_1,{\bf v}_2,\cdots,{\bf v}_n)$ and $\Lambda=\{i_1,i_2,\cdots,i_s\}\subseteq [n]$, let ${\bf v}_\Lambda=({\bf v}_{i_1},{\bf v}_{i_2},\cdots,{\bf v}_{i_s})$. For an $m\times n$ matrix $A=(a_{i,j})_{i\in[m],j\in[n]}$, let $A(\Lambda,\Gamma)$ be the matrix $(a_{i,j})_{i\in\Lambda,j\in\Gamma}$, whose rows and columns are labeled by the subset  $\Lambda\subseteq[m]$ and $\Gamma\subseteq [n]$, respectively.  For an $m\times n$ matrix $A=(a_{i,j})_{i\in[m],j\in[n]}$ and a $p\times t$ matrix $B$, the Kronecker product $A\otimes B=(a_{i,j}B)_{i\in[m],j\in[n]}$, which is an $mp\times nt$  matrix.
 In particular,  it is respectively abbreviated as $A(:,\Gamma)$ or $A(\Lambda,:)$ if $\Lambda=[m]$ or $\Gamma=[n]$. Moreover, a block matrix formed as $A=(A^{(1)},A^{(2)},...,A^{(n)})$ is defined by concatenation with matrices having the same number of rows.
For the above matrix $A$ and a subset $\Gamma=\{i_1,\cdots,i_s\}\subseteq [n]$, denote $A^\Gamma=(A^{(i_1)},\cdots,A^{(i_s)})$.
In the absence of specific instructions, vectors throughout this paper are always referred to row vectors.

\subsection{Systematic Joint MDS-Coded Distributed Storage System}

Suppose a distributed storage system consists of $N$ servers ${\rm Serv}^{(1)}, \cdots{\rm Serv}^{(N)}$, which adopts a systematic $(N,K;\ell)$ MDS array code to store a message vector $({\bf c}_1,\cdots,{\bf c}_K)$ that is composed of $M $ files $W_1,\cdots,W_M$. These files are independent and each consists of $L$ symbols drawn independently and uniformly from the finite field $\mathbb{F}_q$, that is,
\begin{equation}\label{eq1}
\forall\ i\in[M],~H(W_i)=L,
H(W_1,W_2,\cdots,W_M)=ML,
\end{equation}
where $H(\cdot)$ is the entropy function with base $q$. Specifically, a systematic $(N,K;\ell)$  MDS array code $\mathcal{C}$  encodes  the message vector $({\bf c}_1,\cdots,{\bf c}_K)$ to a codeword ${\bf c}=({\bf c}_1,{\bf c}_2,\cdots, {\bf c}_N)$ by using a systematic generator matrix $G$, i.e.,
\begin{equation}\label{eq2}
{\bf c}=({\bf c}_1,{\bf c}_2,\cdots, {\bf c}_N)=({\bf c}_1,{\bf c}_2,\cdots, {\bf c}_K)G,
\end{equation} where ${\bf c}_i\in\mathbb{F}_q^\ell$, and the MDS property means that for any $K$-subset $\Gamma$ of $[N]$, the codeword ${\bf c}$ can be recovered by  ${\bf c}_\Gamma$.
Moreover, each server ${\rm Serv}^{(j)}$ stores the coded data ${\bf c}_j$ for $j\in[N]$. So, we call such a storage system an {\it $(N,K;\ell)$  joint systematic MDS-coded distributed storage system}.

 Now, we introduce  {\it storage pattern}  to characterize the relationship between  $M$ files and the message vector $({\bf c}_1,\cdots,{\bf c}_K)$.
 Without loss of generality, we may assume that $K\mid ML$, or, $ML=K\ell$ for some integer $\ell$.
\begin{definition}
A matrix $\mathcal{P}=(p_{i,j})\in \mathbb{N}^{M\times K}$ is called as a {\it storage pattern} of an $(N,K;\ell)$ joint systematic MDS-coded distributed storage system that stores  $M$ files $W_1,\cdots,W_M\in\mathbb{F}_q^L$, if  $W_i=({\bf W}_{i,1},{\bf W}_{i,2},\cdots,{\bf W}_{i,K})$ and ${\bf c}_{j}=({\bf W}_{1,j},{\bf W}_{2,j},\cdots,{\bf W}_{M,j})\in \mathbb{F}_q^\ell$ for all $i\in[M],j\in[K]$, where each ${\bf W}_{i,j} \in\mathbb{F}_q^{p_{i,j}}$ is the part of $W_i$ stored on the ${\rm Serv}^{(j)}$.
\end{definition}
 For any storage pattern $\mathcal{P}$, it has that
$
\begin{cases}
\sum^K_{j=1}p_{i,j}=L, & i\in[M],\\
\sum^{M}_{i=1}p_{i,j}=\ell, &j\in[K].
\end{cases}
$ Now, we use an example to explain the storage code and storage pattern.

\begin{example}\label{ex0} Sun and Tian in~\cite{ST19:JMDSPIR} constructed two classes of  joint MDS-coded PIR schemes. For the case of  $(M=2,N=nt,K=2t)$ with $n\geq 3$, $L=t(n-1)$ and $\ell=n-1$, the storage code is a systematic $(N=nt,K=2t;\ell=n-1)$ MDS array code and the storage pattern is $\mathcal{P}= (n-1)I_2\otimes {\bf 1}_t \in \mathbb{N}^{2\times K}$ in their scheme.  For the case of $(M\geq 2,N=t(M+1),K=Mt)$ with $L=2t$ and $\ell=2$,  the storage code is a systematic $(N=t(M+1),K=Mt;\ell=2)$ MDS array code and
the storage pattern $\mathcal{P}=2 I_M\otimes {\bf 1}_t$. One can find that the storage pattern and dimension $K$ have the same form in these two cases, that is, $\mathcal{P}=\ell I_M\otimes {\bf 1}_t$  and $K=Mt$ for $M\geq 2, t\geq 1$.
 \end{example}
 Due to the practical utility of systematic codes, we focus on the joint PIR problem within the systematic MDS array code framework, and then explore  the affirmative answers to questions (1) and (2).

\subsection{Joint Systematic MDS-coded PIR Model}
 A PIR scheme allows a user to retrieve a file, say $W_\theta$, for some $\theta\in [M]$ by accessing the $N$ servers while ensuring the secrecy of the index $\theta$ against any individual server. A  PIR scheme over a joint systematic MDS-coded distributed storage system can be formally described by the following two phases:
\begin{itemize}
  \item {\bf Query phase.} Suppose a user wants to retrieve a file $W_\theta$ for some $\theta\in[M]$, then the user privately generates queries ${\rm
      Que}(\theta,\mathcal{S})=(Q_\theta^{(1)},...,Q_\theta^{(N)})$, and sends $Q_\theta^{(i)}$ to ${\rm Serv}^{(i)}$, $1\leq i\leq N$, where $\mathcal{S}$ are some random resources. Note that $\mathcal{S}$ and $\theta$ are private information only known by the user, and the {\it query function} ${\rm Que}(\cdot,\cdot)$ is determined by the scheme.  For simplicity, we define the query set $\mathcal{Q}=\{Q_\theta^{(j)}, \mathcal{S}: \theta\in[M],j\in[N]\}$. Hence, it is natural to assume the user has no  prior knowledge of all files, that is,
      \begin{equation}\label{eq3}
      I(W_{[M]};\mathcal{Q})=0,
      \end{equation}
      where $I(X;Y)=H(X)-H(X|Y)$, which is the mutual information function with base $q$.
  \item {\bf Response phase.} After receiving  the query $Q_\theta^{(i)}$, $1\leq i\leq N$, the $i$th server ${\rm Serv}^{(i)}$  computes ${\rm Ans}^{(i)}(Q_\theta^{(i)},{\bf c}_i)=A_\theta^{(i)}$ and sends it back to the user, where ${\rm Ans}^{(i)}(\cdot,\cdot)$ is the {\it answer  function} defined by the scheme.  Hence,
      \begin{equation}\label{eq4}
      H(A^{(i)}_\theta\mid Q^{(i)}_\theta,{\bf c}_i)=0.
      \end{equation}
\end{itemize}

  Moreover, the functions ${\rm Que}$ and ${\rm Ans}^{(i)}$ are deterministic in the PIR scheme and need to satisfy the following two conditions.
\begin{itemize}
\item[(1)]{\it Correctness: } The user can definitely recover the file $W_\theta$ after collecting all answers from all servers. That is,
\begin{equation}\label{eq5}
 H(W_\theta\mid A^{[N]}_\theta,Q^{[N]}_\theta)=0.
\end{equation}
\item[(2)]{\it Privacy:} For any $n\in [N]$, the server ${\rm Serv}^{(n)}$  should be indistinguishable from the scheme. So, for $\theta\in[M]$,
    $$(Q^{(n)}_\theta, A^{(n)}_{\theta},{\bf c}_n,W_{[M]})\sim (Q^{(n)}_1, A^{(n)}_{1},{\bf c}_n,W_{[M]}),$$where $A\sim B$ denotes that random variables $A$ and $B$ are identically distributed. This implies that the queries and answers are independent of the desired index $\theta$ and ${\rm Serv}^{(n)}$ gets no information on the index $\theta$.  That is, for $n\in[N]$,
\begin{equation}\label{eq6}
I(\theta;A_\theta^{(n)},Q_\theta^{(n)},W_{[M]})=0.
\end{equation}
\end{itemize}

To measure the download efficiency of a joint MDS-coded PIR scheme, we define
its {\it retrieval rate} as
$R_{\rm PIR}=\frac{H(W_\theta)}{\sum_{n=1}^{N}H(A^{(n)}_\theta)}=\frac{L}{D},$ where $D=\sum_{n=1}^{N}H(A^{(n)}_\theta)$.
This quantity characterizes the number of bits of desired information that can be retrieved per bit of total downloaded.
For a fixed storage pattern $\mathcal{P}$ and a fixed systematic storage code $\mathcal{C}$, the {\it capacity} of joint MDS-coded PIR, denoted by $C_{\mbox{\tiny JMDS-PIR}}(M, N, K,\mathcal{P},\mathcal{C})$, is defined as the supremum of achievable retrieval rates over all joint MDS-coded PIR schemes operating on the corresponding storage system.
Furthermore, for a fixed storage pattern $\mathcal P$, we define
$C^{(s)}_{\mbox{\tiny JMDS-PIR}}(M,N,K,\mathcal P)=\max_{\mathcal C}C_{\mbox{\tiny JMDS-PIR}}(M,N,K,\mathcal P,\mathcal C),$
where the maximum is taken over all systematic $(N,K;\ell)$ MDS array storage codes.

In this paper, we aim to establish upper bounds on the capacity  $C^{(s)}_{\mbox{\tiny JMDS-PIR}}(M,N,K,\mathcal{P})$, and construct the corresponding schemes whose rates are strictly greater than the capacity $C_\oplus(M,N,K)$ of separate MDS-coded PIR for the cases $K=Mt$  with storage patterns $\mathcal{P}=\ell I_M\otimes {\bf 1}_t$ and  $K=Mt+1$ with $\mathcal{P}=\frac{\ell}{M}(MI_M\otimes {\bf 1}_t~\mid {\bf 1}_M^{\intercal})$, respectively.

\section{Upper Bounds on the Capacity of Joint Systematic MDS-coded PIR for Two Cases }\label{sec3}

Before deriving the upper bound on  the capacity, we first build some key lemmas.

\begin{lemma}\label{lem1} For a joint systematic  MDS-coded PIR scheme, for any $\theta,\theta'\in [M]$, any subset $\Lambda\subseteq[M]$, and $i\in[N]$,
\begin{equation}\label{eq7}
H(A^{(i)}_\theta|W_\Lambda,\mathcal{Q})=H(A^{(i)}_{\theta'}|W_\Lambda,\mathcal{Q}).
\end{equation}
\end{lemma}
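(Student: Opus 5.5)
This is the standard privacy-based symmetry lemma, as in Banawan--Ulukus. The idea is to show that the joint distribution of $(A^{(i)}_\theta, W_{[M]}, Q^{(i)}_\theta)$ does not depend on $\theta$. The conditional entropy given $W_\Lambda$ and the queries can then be rewritten as a quantity involving only server $i$'s local view, which is invariant in $\theta$.

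The first step uses \eqref{eq4}: $A^{(i)}_\theta$ is a deterministic function of $(Q^{(i)}_\theta,{\bf c}_i)$, and ${\bf c}_i$ is a deterministic function of $W_{[M]}$ through the storage pattern $\mathcal P$ and the generator matrix $G$. The second step reduces the conditioning on the full query set $\mathcal{Q}$ to conditioning on $Q^{(i)}_\theta$ alone:
$$H(A^{(i)}_\theta\mid W_\Lambda,\mathcal{Q})=H(A^{(i)}_\theta\mid W_\Lambda,Q^{(i)}_\theta).$$
The inequality $\le$ holds because conditioning reduces entropy. For the reverse direction, note that given $Q^{(i)}_\theta$, the answer $A^{(i)}_\theta$ depends only on ${\bf c}_i$, i.e.\ on $W_{[M]}$. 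Moreover, $W_{[M]}$ is independent of $\mathcal Q$ by \eqref{eq3}. It follows that $(W_{[M]},A^{(i)}_\theta)$ is conditionally independent of $\mathcal{Q}$ given $Q^{(i)}_\theta$, and this yields the Markov chain
$$(\mathcal{Q}\setminus Q^{(i)}_\theta) \to (W_\Lambda, Q^{(i)}_\theta)\to A^{(i)}_\theta.$$

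This Markov chain is the step requiring care. The queries in $\mathcal{Q}$ are themselves functions of $\mathcal S$ and $\theta$, and the file contents do not enter them. Concretely, I would write
$$p(w_{[M]},\mathcal Q)=p(w_{[M]})\,p(\mathcal Q).$$
Since $A^{(i)}_\theta = f(Q^{(i)}_\theta, w_{[M]})$, the conditional law of $A^{(i)}_\theta$ given $(w_\Lambda,\mathcal Q)$ is the pushforward of $p(w_{[M]\setminus\Lambda})$ under $f(Q^{(i)}_\theta,\cdot)$. That law depends on $\mathcal Q$ only through $Q^{(i)}_\theta$, which is exactly the Markov property.

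The third step applies privacy \eqref{eq6}. The tuple $(A^{(i)}_\theta,Q^{(i)}_\theta,W_{[M]})$ has the same distribution as $(A^{(i)}_{\theta'},Q^{(i)}_{\theta'},W_{[M]})$, so its marginal on $(A^{(i)},Q^{(i)},W_\Lambda)$ also coincides across $\theta$ and $\theta'$. Hence
$$H(A^{(i)}_\theta\mid W_\Lambda,Q^{(i)}_\theta)=H(A^{(i)}_{\theta'}\mid W_\Lambda,Q^{(i)}_{\theta'}).$$
Applying the second step in reverse for $\theta'$ gives \eqref{eq7}. I expect the main obstacle to be the second step: stating cleanly why the extra queries in $\mathcal Q$ carry no further information. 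The rest is immediate from \eqref{eq6}.
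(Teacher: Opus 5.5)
Your proof is correct and follows the same route as the paper. Privacy \eqref{eq6} makes $H(A^{(i)}_\theta\mid W_\Lambda,Q^{(i)}_\theta)$ independent of $\theta$, and the fact that $A^{(i)}_\theta$ is a function of $({\bf c}_i,Q^{(i)}_\theta)$, together with \eqref{eq3}, lets you replace the conditioning on $\mathcal{Q}$ by $Q^{(i)}_\theta$ alone. Your explicit use of \eqref{eq3} to justify the Markov chain $(\mathcal{Q}\setminus Q^{(i)}_\theta)\to(W_\Lambda,Q^{(i)}_\theta)\to A^{(i)}_\theta$ is in fact tighter than the paper. The paper passes from $I(A^{(i)}_\theta;\mathcal{Q}\setminus\{Q^{(i)}_\theta\}\mid Q^{(i)}_\theta)=0$ to the same statement additionally conditioned on $W_\Lambda$ without comment, a step that needs exactly the independence you invoke.
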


\begin{proof} By the privacy \eqref{eq6}, it has that for $i\in[N],\Lambda\subseteq [M]$, $I(\theta;A^{(i)}_\theta,Q^{(i)}_\theta,W_\Lambda)=0$. Hence,   for any $\theta,\theta'\in [M]$, $H(A^{(i)}_\theta|W_\Lambda,Q^{(i)}_\theta)=H(A^{(i)}_{\theta'}|W_\Lambda,Q^{(i)}_{\theta'})$.

Next it is sufficient to show that for all $i\in[N],\theta \in[M]$, $H(A^{(i)}_\theta|W_\Lambda,\mathcal{Q})=H(A^{(i)}_\theta|W_\Lambda,Q^{(i)}_\theta)$. Note that $A^{(i)}_\theta$ is a deterministic function of  $({\bf c}_i,Q^{(i)}_{\theta})$, hence $A^{(i)}_\theta$  is conditionally independent of the  $\mathcal{Q}\setminus \{Q^{(i)}_{\theta}\}$ given $Q^{(i)}_{\theta}$. That is, $I(A^{(i)}_\theta; \mathcal{Q}\setminus \{Q^{(i)}_{\theta}\}|Q^{(i)}_{\theta})=0$, which implies that for $\Lambda\subseteq[M]$,  $I(A^{(i)}_\theta; \mathcal{Q}\setminus \{Q^{(i)}_{\theta}\}|W_\Lambda,Q^{(i)}_{\theta})=0$. So,
\begin{align*}
H(A^{(i)}_\theta|W_\Lambda,\mathcal{Q})&=H(A^{(i)}_\theta|W_\Lambda,\mathcal{Q})+I(A^{(i)}_\theta; \mathcal{Q}\setminus \{Q^{(i)}_{\theta}\}|W_\Lambda,Q^{(i)}_{\theta})=H(A^{(i)}_\theta|W_\Lambda,Q^{(i)}_\theta).
\end{align*}\end{proof}

Next, we characterize the subset $\Gamma$ of $[N]$ that contains independent answers, given the data ${\bf c}_\Lambda$ stored in any subset of servers and queries $\mathcal{Q}$ in joint MDS PIR schemes.
\begin{lemma}\label{lem2}  For a joint systematic  MDS-coded PIR scheme, for $\theta\in [M]$ and  $\Lambda\subsetneq[N]$ with $|\Lambda|<K$, and $\Gamma\subseteq[N]\setminus\Lambda$ with $|\Gamma|=K-|\Lambda|$,
\begin{equation}\label{eq8}
H(A^{\Gamma}_\theta|{\bf c}_{\Lambda},\mathcal{Q})=\sum_{i\in\Gamma}H(A^{(i)}_\theta|{\bf c}_{\Lambda},\mathcal{Q}).
\end{equation}
\end{lemma}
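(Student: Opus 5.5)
The plan is to prove the two inequalities ``$\le$'' and ``$\ge$'' separately. The first is immediate from subadditivity of entropy:
\[
H(A^{\Gamma}_\theta|{\bf c}_{\Lambda},\mathcal{Q})\le\sum_{i\in\Gamma}H(A^{(i)}_\theta|{\bf c}_{\Lambda},\mathcal{Q}).
\]
All the work lies in the reverse inequality, i.e.\ in showing that the answers $\{A^{(i)}_\theta\}_{i\in\Gamma}$ are mutually independent given $({\bf c}_\Lambda,\mathcal{Q})$.

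\textbf{Independence of the stored blocks.} Set $\Omega=\Lambda\cup\Gamma$, which has exactly $K$ elements. By the MDS property, ${\bf c}_\Omega$ determines the whole codeword, and in particular the message $({\bf c}_1,\dots,{\bf c}_K)$. Hence the map $({\bf c}_1,\dots,{\bf c}_K)\mapsto{\bf c}_\Omega$ is an invertible linear map on $\mathbb{F}_q^{K\ell}$. Under any storage pattern, $({\bf c}_1,\dots,{\bf c}_K)$ is just a rearrangement of the symbols of $W_1,\dots,W_M$, which by \eqref{eq1} are i.i.d.\ uniform over $\mathbb{F}_q$. So the message vector is uniform on $\mathbb{F}_q^{K\ell}$, and therefore ${\bf c}_\Omega$ is also uniform on $\mathbb{F}_q^{K\ell}$. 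It follows that the blocks ${\bf c}_j$, $j\in\Omega$, are mutually independent and each is uniform on $\mathbb{F}_q^\ell$. Equivalently,
\[
H({\bf c}_\Omega)=K\ell=\sum_{j\in\Omega}H({\bf c}_j).
\]
By \eqref{eq3}, $\mathcal{Q}$ is independent of $W_{[M]}$, hence of ${\bf c}_\Omega$. So the blocks ${\bf c}_j$, $j\in\Gamma$, remain mutually independent conditioned on $({\bf c}_\Lambda,\mathcal{Q})$:
\[
H({\bf c}_\Gamma|{\bf c}_\Lambda,\mathcal{Q})=\sum_{i\in\Gamma}H({\bf c}_i|{\bf c}_\Lambda,\mathcal{Q}).
\]

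\textbf{From blocks to answers.} By \eqref{eq4}, each answer $A^{(i)}_\theta$ is a deterministic function of $(Q^{(i)}_\theta,{\bf c}_i)$, and $Q^{(i)}_\theta$ is part of the conditioning $\mathcal{Q}$. I would make this step rigorous with a chain-rule computation. Enumerate $\Gamma=\{i_1,\dots,i_s\}$ and write
\[
H(A^{\Gamma}_\theta|{\bf c}_\Lambda,\mathcal{Q})=\sum_k H(A^{(i_k)}_\theta|A^{\{i_1,\dots,i_{k-1}\}}_\theta,{\bf c}_\Lambda,\mathcal{Q}).
\]
For each $k$, we need the mutual information $I\big(A^{(i_k)}_\theta;A^{\{i_1,\dots,i_{k-1}\}}_\theta\,\big|\,{\bf c}_\Lambda,\mathcal{Q}\big)$ to vanish. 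It is bounded above by $I\big({\bf c}_{i_k};{\bf c}_{\{i_1,\dots,i_{k-1}\}}\,\big|\,{\bf c}_\Lambda,\mathcal{Q}\big)$ by the data-processing inequality, because given $\mathcal{Q}$ each answer is a function of its own block alone. That upper bound is $0$ by the independence established in the previous step. Hence each conditional term equals $H(A^{(i_k)}_\theta|{\bf c}_\Lambda,\mathcal{Q})$, and summing over $k$ gives \eqref{eq8}.

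\textbf{Main obstacle.} The step needing care is the first one: deriving mutual independence of the $K$ blocks ${\bf c}_\Omega$ from the MDS property together with uniformity of the files. This requires that the systematic part is exactly a rearrangement of the file symbols, so that the message is uniform on $\mathbb{F}_q^{K\ell}$ with no redundancy; this holds because $ML=K\ell$. It also requires $|\Omega|=K$ exactly; if $\Omega$ had more than $K$ elements, the blocks would be dependent. A secondary subtlety is that $\mathcal{Q}$ contains the randomness $\mathcal{S}$, so one must check that conditioning on it does not couple the blocks. This is guaranteed by \eqref{eq3}, which makes the joint distribution of $({\bf c}_\Omega,\mathcal{Q})$ a product.
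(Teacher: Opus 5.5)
Your proof is correct and follows the same route as the paper. The paper also uses the MDS property to make the $K$ blocks ${\bf c}_{\Lambda\cup\Gamma}$ mutually independent (it phrases this as $H({\bf c}_\Gamma,{\bf c}_\Lambda)=ML$), and then uses that each $A^{(i)}_\theta$ is a function of $({\bf c}_i,Q^{(i)}_\theta)$ to get conditional independence of the answers given $({\bf c}_\Lambda,\mathcal{Q})$. Your version makes explicit three steps the paper leaves implicit: the uniformity/bijection argument for mutual independence, the use of \eqref{eq3} to decouple $\mathcal{Q}$ from the stored data, and the chain-rule/data-processing step.
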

\begin{proof} Note that  $\Gamma\cap \Lambda=\emptyset$,  so it has that $|\Gamma\cup \Lambda|=K$. By the MDS property of the storage code, we know that $H({\bf c}_\Gamma,{\bf c}_\Lambda)=H(W_{[M]})=ML$,  which implies that ${\bf c}_\Gamma,{\bf c}_\Lambda$ are  independent. Hence, ${\bf c}_\Gamma$ are also  conditionally independent given ${\bf c}_\Lambda$.
 Moreover, $A^{(i)}_\theta,i\in\Gamma$ are  deterministic functions of  $({\bf c}_i,Q^{(i)}_{\theta})$, which implies that  they are conditionally independent given ${\bf c}_\Lambda, \mathcal{Q}$. Hence, the proof is completed.
\end{proof}

Combining Lemma \ref{lem1} and Lemma \ref{lem2}, one can directly obtain the following corollary.

\begin{corollary} For a joint systematic  MDS-coded PIR scheme, if there is a subset $\Lambda\subsetneq [N]$ with $|\Lambda|<K$ such that ${\bf c}_\Lambda=W_{\Lambda'}$ for some fixed $\Lambda'\subseteq [M]$, then for $\theta,\theta'\in [M]$  and $\Gamma\subseteq[N]\setminus\Lambda$ with $|\Gamma|=K-|\Lambda|$,
\begin{equation}\label{eq9}
H(A^{\Gamma}_\theta|{\bf c}_{\Lambda},\mathcal{Q})=H(A^{\Gamma}_{\theta'}|{\bf c}_{\Lambda},\mathcal{Q}).
\end{equation}
\end{corollary}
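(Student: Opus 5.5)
The plan is to combine Lemma~\ref{lem2} with Lemma~\ref{lem1}, using the hypothesis ${\bf c}_\Lambda=W_{\Lambda'}$ to turn the conditioning on stored data into conditioning on whole files.

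First, fix $\theta\in[M]$. Since $|\Lambda|<K$, $\Gamma\subseteq[N]\setminus\Lambda$ and $|\Gamma|=K-|\Lambda|$, Lemma~\ref{lem2} applies directly and gives
\begin{equation*}
H(A^{\Gamma}_\theta|{\bf c}_{\Lambda},\mathcal{Q})=\sum_{i\in\Gamma}H(A^{(i)}_\theta|{\bf c}_{\Lambda},\mathcal{Q}).
\end{equation*}
The same identity holds with $\theta$ replaced by $\theta'$, since Lemma~\ref{lem2} is valid for every index.

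Second, I would use ${\bf c}_\Lambda=W_{\Lambda'}$ to rewrite each summand as $H(A^{(i)}_\theta|W_{\Lambda'},\mathcal{Q})$. The equality is meant as an identity of random variables, a fixed deterministic invertible relabeling, so conditioning on one is the same as conditioning on the other. Lemma~\ref{lem1} with $\Lambda=\Lambda'$ then gives, for each $i\in\Gamma$,
\begin{equation*}
H(A^{(i)}_\theta|W_{\Lambda'},\mathcal{Q})=H(A^{(i)}_{\theta'}|W_{\Lambda'},\mathcal{Q}).
\end{equation*}
Summing over $i\in\Gamma$, rewriting $W_{\Lambda'}$ back as ${\bf c}_\Lambda$, and applying Lemma~\ref{lem2} for $\theta'$ in reverse yields \eqref{eq9}.

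The only delicate point is this change of conditioning. Lemma~\ref{lem1} is stated for conditioning on whole files $W_\Lambda$, not on arbitrary server contents. This is exactly why the corollary assumes ${\bf c}_\Lambda$ coincides with a union of whole files. If the identification were only up to a bijection, I would note that conditional entropy is invariant under one-to-one functions of the conditioning variable. With that observed, the rest is immediate.
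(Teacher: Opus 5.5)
Your proposal is correct and follows essentially the same route as the paper, which derives the corollary by the same combination. You split $H(A^{\Gamma}_\theta|{\bf c}_{\Lambda},\mathcal{Q})$ into per-server terms via Lemma~\ref{lem2}, identify ${\bf c}_\Lambda$ with $W_{\Lambda'}$ so that Lemma~\ref{lem1} applies to each term, and reassemble the sum using Lemma~\ref{lem2} for $\theta'$.
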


So far, the preparation for deriving the upper bound on the capacity has been completed. In next two subsections, we present the upper bounds of $C^{(s)}_{\mbox{\tiny JMDS-PIR}}(M,N,K,\mathcal{P})$ for $K=Mt$, $\mathcal{P}=\ell I_M\otimes {\bf 1}_t$, and $K=Mt+1,~\mathcal{P}=\frac{\ell}{M}(MI_M\otimes {\bf 1}_t~\mid {\bf 1}_M^{\intercal})$, respectively.

\subsection{The upper bound for $K=Mt$ and $\mathcal{P}=\ell I_M\otimes {\bf 1}_t$}

For the storage pattern $\mathcal{P}=\ell I_M\otimes {\bf 1}_t$, one can know that
$W_i=({\bf c}_{(i-1)t+1},{\bf c}_{(i-1)t+2},\cdots,{\bf c}_{it})$ for all $i\in[M]$. Denote $\chi_\Lambda=\{(i-1)t+j:i\in\Lambda,j\in[t]\}$. So, it has that for all  $\Lambda \subseteq [M]$,
 ${\bf c}_{\chi_\Lambda}=W_\Lambda$.
Moreover, for all
$\theta\in[M]$ and $\Lambda\subseteq[M]$,
\begin{equation}\label{eq11}
H(A^{\chi_\Lambda}_\theta|W_\Lambda,\mathcal{Q})=H(A^{\chi_\Lambda}_\theta|{\bf c}_{\chi_\Lambda},\mathcal{Q})=0.
\end{equation}

Next, we establish a key lemma to derive the upper bound of the capacity for the case of $K=Mt$ and the storage pattern $\mathcal{P}=\ell I_M\otimes {\bf 1}_t$.
\begin{lemma}\label{lem3}  For a joint systematic MDS-coded PIR scheme with the storage pattern matrix $\mathcal{P}=\ell I_M\otimes {\bf 1}_t$, for any subset $\Lambda\subsetneq[M],\theta\in \Lambda$, and any $\theta'\in [M]\setminus\Lambda$,
\begin{equation}\label{eq12}
H(A^{[N]}_\theta|{\bf c}_{\chi_\Lambda},\mathcal{Q})\geq \frac{K-|\Lambda|t}{N-|\Lambda|t}L+ \frac{K-|\Lambda|t}{N-|\Lambda|t} H(A^{[N]}_{\theta'}|{\bf c}_{\chi_{\Lambda\cup \{\theta'\}}},\mathcal{Q}).
\end{equation}
\end{lemma}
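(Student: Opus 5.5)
The plan follows the Banawan--Ulukus recursion. Lemma~\ref{lem2} supplies the independence of $K-|\Lambda|t$ answers. Lemma~\ref{lem1} moves from $\theta$ to $\theta'$. Correctness then extracts $W_{\theta'}$.

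Throughout we write $\bar\chi=[N]\setminus\chi_\Lambda$; note $|\bar\chi|=N-|\Lambda|t$. Since $\theta\in\Lambda\subsetneq[M]$, we have $1\le|\Lambda|t<K$, so Lemma~\ref{lem2} applies with the server set $\chi_\Lambda$ in place of $\Lambda$.

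\textbf{Step 1 (independent subsets and averaging).} Fix any $\Gamma\subseteq\bar\chi$ with $|\Gamma|=K-|\Lambda|t$. Monotonicity of entropy and Lemma~\ref{lem2} give
\[
H(A^{[N]}_\theta|{\bf c}_{\chi_\Lambda},\mathcal{Q})\ \ge\ H(A^{\Gamma}_\theta|{\bf c}_{\chi_\Lambda},\mathcal{Q})=\sum_{i\in\Gamma}H(A^{(i)}_\theta|{\bf c}_{\chi_\Lambda},\mathcal{Q}).
\]
Averaging over all $\binom{N-|\Lambda|t}{K-|\Lambda|t}$ such $\Gamma$ counts each $i\in\bar\chi$ equally often. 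This yields
\[
H(A^{[N]}_\theta|{\bf c}_{\chi_\Lambda},\mathcal{Q})\ \ge\ \frac{K-|\Lambda|t}{N-|\Lambda|t}\sum_{i\in\bar\chi}H(A^{(i)}_\theta|W_\Lambda,\mathcal{Q}),
\]
where we used ${\bf c}_{\chi_\Lambda}=W_\Lambda$.

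\textbf{Step 2 (privacy swap and recombination).} By Lemma~\ref{lem1}, each term $H(A^{(i)}_\theta|W_\Lambda,\mathcal{Q})$ equals $H(A^{(i)}_{\theta'}|W_\Lambda,\mathcal{Q})$. Subadditivity then gives
\[
\sum_{i\in\bar\chi}H(A^{(i)}_{\theta'}|W_\Lambda,\mathcal{Q})\ \ge\ H(A^{\bar\chi}_{\theta'}|W_\Lambda,\mathcal{Q}).
\]
By \eqref{eq11}, the answers $A^{\chi_\Lambda}_{\theta'}$ are determined by $(W_\Lambda,\mathcal{Q})$. Hence the right-hand side equals $H(A^{[N]}_{\theta'}|W_\Lambda,\mathcal{Q})$.

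\textbf{Step 3 (peeling off $W_{\theta'}$).} The queries $Q^{[N]}_{\theta'}$ are contained in $\mathcal{Q}$, so by correctness \eqref{eq5}, $W_{\theta'}$ is a function of $(A^{[N]}_{\theta'},\mathcal{Q})$. Therefore
\[
H(A^{[N]}_{\theta'}|W_\Lambda,\mathcal{Q})=H(A^{[N]}_{\theta'},W_{\theta'}|W_\Lambda,\mathcal{Q})=H(W_{\theta'}|W_\Lambda,\mathcal{Q})+H(A^{[N]}_{\theta'}|W_{\Lambda\cup\{\theta'\}},\mathcal{Q}).
\]
The files are independent by \eqref{eq1}, independent of the queries by \eqref{eq3}, and $\theta'\notin\Lambda$. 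Hence $H(W_{\theta'}|W_\Lambda,\mathcal{Q})=L$.

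Finally, $W_{\Lambda\cup\{\theta'\}}={\bf c}_{\chi_{\Lambda\cup\{\theta'\}}}$. Chaining Steps 1--3 gives exactly \eqref{eq12}.

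The only delicate point is Step 1. One must ensure that the servers in $\chi_\Lambda$ contribute nothing, via \eqref{eq11}, and that Lemma~\ref{lem2} is invoked with a valid $|\Gamma|=K-|\chi_\Lambda|$; the condition $|\Lambda|<M$ guarantees this. The rest is a routine chain of entropy identities.
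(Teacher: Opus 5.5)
Your proof is correct and follows essentially the same route as the paper: Lemma~\ref{lem2} applied to $(K-|\Lambda|t)$-subsets $\Gamma\subseteq[N]\setminus\chi_\Lambda$, the privacy swap of Lemma~\ref{lem1} (the paper packages this as \eqref{eq9}), averaging over all such $\Gamma$, removal of $A^{\chi_\Lambda}_{\theta'}$ via \eqref{eq11}, and peeling off $W_{\theta'}$ using correctness and \eqref{eq3}. The only difference is cosmetic: you split $H(A^{\Gamma}_\theta|\cdot)$ into per-server terms before swapping $\theta\to\theta'$, so plain subadditivity replaces the paper's appeal to Han's inequality, and given Lemma~\ref{lem2} the two computations give the same bound.
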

\begin{proof}By \eqref{eq11}, it has that $H(A_\theta^{\chi_\Lambda}|{\bf c}_{\chi_\Lambda},\mathcal{Q},A_\theta^{[N]\setminus\chi_\Lambda})=0$.
Then, \begin{equation}\label{eq13}H(A^{[N]}_\theta|{\bf c}_{\chi_\Lambda},\mathcal{Q})=H(A_\theta^{[N]\setminus\chi_\Lambda}|{\bf c}_{\chi_\Lambda},\mathcal{Q}) +H(A_\theta^{\chi_\Lambda}|{\bf c}_{\chi_\Lambda},\mathcal{Q},A_\theta^{[N]\setminus\chi_\Lambda})=H(A_\theta^{[N]\setminus\chi_\Lambda}|{\bf c}_{\chi_\Lambda},\mathcal{Q}).\end{equation}
According to \eqref{eq9} and $K-|\Lambda|t$-subset $\Gamma\subseteq[N]\setminus\chi_\Lambda$, it has that $H(A_\theta^{[N]\setminus\chi_\Lambda}|{\bf c}_{\chi_\Lambda},\mathcal{Q})\geq H(A^\Gamma_\theta|{\bf c}_{\chi_\Lambda},\mathcal{Q})=H(A^\Gamma_{\theta'}|{\bf c}_{\chi_\Lambda},\mathcal{Q}).$ By Han's inequality \cite[Thm. 17.6.1]  {CT12:EIT} and  \eqref{eq13},
$$H(A^{[N]}_\theta|{\bf c}_{\chi_\Lambda},\mathcal{Q})
\geq \frac{1}{\binom{N-|\Lambda|t}{K-|\Lambda|t}}
\sum_{\substack{\Gamma:|\Gamma|=K-|\Lambda|t,\\\Gamma\subseteq[N]\setminus\chi_\Lambda}}
H(A^\Gamma_{\theta'}|{\bf c}_{\chi_\Lambda},\mathcal{Q})
\geq\frac{K-|\Lambda|t}{N-|\Lambda|t}H(A_{\theta'}^{[N]\setminus\chi_\Lambda}|{\bf c}_{\chi_\Lambda},\mathcal{Q})=\frac{K-|\Lambda|t}{N-|\Lambda|t}H(A^{[N]}_{\theta'}|{\bf c}_{\chi_\Lambda},\mathcal{Q}).$$
Moreover, $H(A^{[N]}_{\theta'}|{\bf c}_{\chi_\Lambda},\mathcal{Q})=H(W_{\theta'}|{\bf c}_{\chi_\Lambda},\mathcal{Q})-H(W_{\theta'}|{\bf c}_{\chi_\Lambda},\mathcal{Q},A^{[N]}_{\theta'})+H(A^{[N]}_{\theta'}|{\bf c}_{\chi_{\Lambda\cup\{\theta'\}}},\mathcal{Q})=L+H(A^{[N]}_{\theta'}|{\bf c}_{\chi_{\Lambda\cup\{\theta'\}}},\mathcal{Q}).$ Substituting this into the above inequality to obtain \eqref{eq12}. \end{proof}

\begin{theorem}\label{thm1} For $K=Mt$ and the storage pattern matrix $\mathcal{P}=\ell I_M\otimes {\bf 1}_t$,
the capacity of the  $(N,K;\ell)$ joint systematic MDS-coded PIR satisfies
\begin{equation}\label{eq14}
C^{(s)}_{\mbox{\tiny\rm JMDS-PIR}}(M,N,K,\mathcal{P})\leq 1-(1-\frac{1}{M})\frac{K}{N}.
  \end{equation}
\end{theorem}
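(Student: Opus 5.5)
The plan is to chain Lemma~\ref{lem3} from $\Lambda=\{\theta\}$ up to $\Lambda=[M]$, in the same way as the converse for separate MDS-coded PIR in~\cite{Bana&Uluk16:CapacityPIRCoded}, and then solve the resulting linear recursion.

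\emph{Step 1 (first file).} Fix $\theta$. Since each $H(A^{(n)}_\theta)\ge H(A^{(n)}_\theta\mid\mathcal{Q})$, we have $D\ge H(A^{[N]}_\theta\mid\mathcal{Q})$. Correctness \eqref{eq5} and $I(W_{[M]};\mathcal{Q})=0$ from \eqref{eq3} give $H(W_\theta\mid\mathcal{Q})=L$ and $H(W_\theta\mid A^{[N]}_\theta,\mathcal{Q})=0$. Hence
\[
D\ \ge\ H(A^{[N]}_\theta\mid\mathcal{Q})\ \ge\ L+H(A^{[N]}_\theta\mid W_\theta,\mathcal{Q})\ =\ L+H(A^{[N]}_\theta\mid {\bf c}_{\chi_{\{\theta\}}},\mathcal{Q}),
\]
where the last equality uses ${\bf c}_{\chi_\Lambda}=W_\Lambda$.

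\emph{Step 2 (iteration).} Order the remaining indices as $\theta_2,\dots,\theta_M$ and set $\Lambda_m=\{\theta,\theta_2,\dots,\theta_m\}$, so $|\Lambda_m|=m$. Apply Lemma~\ref{lem3} with $\Lambda=\Lambda_m$ and $\theta'=\theta_{m+1}$. The new desired index $\theta_{m+1}$ lies in $\Lambda_{m+1}$, so the lemma can be applied again at the next step. The chain stops at $\Lambda_M=[M]$, where
\[
H(A^{[N]}\mid{\bf c}_{\chi_{[M]}},\mathcal{Q})=H(A^{[N]}\mid W_{[M]},\mathcal{Q})=0,
\]
because every ${\bf c}_j$ is a function of $W_{[M]}$ and each answer is a function of $({\bf c}_j,Q^{(j)})$. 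If $N=K$ the bound reads $1/M$, which is trivial, so assume $N>K$; then all denominators $N-mt$ are positive. Defining $x_M=0$ and $x_m=\frac{K-mt}{N-mt}(1+x_{m+1})$, the chain gives $D/L\ge 1+x_1$.

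\emph{Step 3 (solving the recursion).} I claim $1+x_m=\frac{N-K+(M-m+1)t}{N-K+t}$, proved by backward induction on $m$. The base case $m=M-1$ gives $1+\frac{t}{N-K+t}$, which matches. For the inductive step, substitute the claim for $m+1$ into the recursion. Because $K-mt=(M-m)t$ and $N-K+(M-m)t=N-mt$, the factor $N-mt$ cancels and leaves
\[
1+\frac{(M-m)t}{N-K+t},
\]
which is the claim for $m$. Taking $m=1$ gives $D/L\ge\frac{N}{N-K+t}$. Therefore
\[
R_{\rm PIR}\le \frac{N-K+t}{N}=1-\Big(1-\frac1M\Big)\frac KN,
\]
using $K-t=(1-\frac1M)K$. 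This holds for every compatible systematic code $\mathcal C$, so it bounds $C^{(s)}$.

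The main point needing care is Step~2's bookkeeping. Lemma~\ref{lem3} must be applied with the previously ``interfering'' index as the new desired one at each step. One must also check that the privacy-based exchange used inside Lemma~\ref{lem3} is legitimate at every level, namely that \eqref{eq9} applies because ${\bf c}_{\chi_\Lambda}=W_\Lambda$ for each $\Lambda_m$. The remaining work is the algebra of Step~3.
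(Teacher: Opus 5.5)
Your proposal is correct and follows essentially the same route as the paper. Step~1 is \eqref{eq16} together with $D\ge H(A^{[N]}_\theta|\mathcal{Q})$, Step~2 is the recursive application of Lemma~\ref{lem3} that yields \eqref{eq17}, and your closed form $1+x_m=\frac{N-K+(M-m+1)t}{N-K+t}$ is a cleaner way of carrying out the telescoping in \eqref{eq010}. One small note: you do not need to set aside $N=K$ as ``trivial,'' because $N-mt\ge (M-m)t>0$ for every $m\le M-1$, so the same chain already covers that case.
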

\begin{proof}To prove \eqref{eq14}, it is sufficient to show that for any joint MDS-coded PIR scheme with storage code  $\mathcal{C}$ and storage pattern $\mathcal{P}$, its retrieval rate
 $R_{\rm PIR}=\frac{L}{D}\leq 1-(1-\frac{1}{M})\frac{K}{N}.$

For $\theta\in[M]$, we prove that \begin{equation}\label{eq15}H(A^{[N]}_{\theta}|\mathcal{Q})\geq L(1+\sum^{M-1}_{j=1}\prod^j_{i=1}\frac{ K-it}{ N-it}).\end{equation}

First, we have
\begin{equation}\label{eq16}
L\stackrel{{\rm (i)}}{=}H(W_\theta|\mathcal{Q})-H(W_\theta|\mathcal{Q},A^{[N]}_\theta)
=I(W_\theta;A^{[N]}_\theta|\mathcal{Q})
\stackrel{{\rm (ii)}}{=}H(A^{[N]}_\theta|\mathcal{Q})-H(A^{[N]}_\theta|{\bf c}_{\chi_{\{\theta\}}},\mathcal{Q}),
\end{equation}
where (i) follows from \eqref{eq3} and \eqref{eq5}, (ii) is due to \eqref{eq11}. Then by \eqref{eq12}, $H(A^{[N]}_\theta|{\bf c}_{\chi_{\{\theta\}}},\mathcal{Q})\geq \frac{K-t}{N-t}(L+H(A^{[N]}_{\theta'}|{\bf c}_{\chi_{\{\theta,\theta'\}}},\mathcal{Q}))$. By recursively using \eqref{eq12} in Lemma \ref{lem3}, we have
\begin{align}
H(A^{[N]}_\theta|{\bf c}_{\chi_{\{\theta\}}},\mathcal{Q})\geq&  L\sum^{M-1}_{j=1}\prod^j_{i=1}\frac{ K-it}{ N-it}+\prod^{M-1}_{i=1}\frac{K-it}{ N-it}H(A^{[N]}_{\theta''}|{\bf c}_{\chi_{[M]}},\mathcal{Q})
\stackrel{{\rm (i)}}{=}L(\sum^{M-1}_{j=1}\prod^j_{i=1}\frac{ K-it}{ N-it}), \label{eq17}
\end{align}
where (i) follows from \eqref{eq4} and ${\bf c}_{\chi_{[M]}}={\bf c}_{[K]}=(W_1,\cdots,W_M)$. Then, \eqref{eq15} can be directly obtained by \eqref{eq16} and \eqref{eq17}. So,
\begin{equation}\label{eq010}\begin{split}
\frac{L}{D}\leq& \frac{L}{H(A^{[N]}_\theta|\mathcal{Q})}\leq (1+\sum^{M-1}_{j=1}\prod^j_{i=1}\frac{ K-it}{ N-it})^{-1}\\
=&(1+\sum^{M-3}_{j=1}\prod^j_{i=1}\frac{ K-it}{ N-it}+(\ \prod^{M-2}_{i=1}\frac{ K-it}{ N-it})(1+\frac{K-(M-1)t}{N-(M-1)t}))^{-1}\\
=&(1+\sum^{M-4}_{j=1}\prod^j_{i=1}\frac{ K-it}{ N-it}+(\prod^{M-3}_{i=1}\frac{ K-it}{N-it})(\frac{N-(M-3)t}{N-(M-1)t}))^{-1}\\
=&(1+\frac{K-t}{N-t}(1+\frac{K-2t}{N-(M-1)t}))^{-1}\\
=&1-(1-\frac{1}{M})\frac{K}{N}.
\end{split}
\end{equation}
\end{proof}

Recall that the schemes for $(M=2,N=nt,K=2t)$ and for $(M,N=(M+1)t,K=Mt)$ constructed in~\cite{ST19:JMDSPIR} have retrieval rates $\frac{n-1}{n}$ and $\frac{2}{M+1}$   respectively, which match the upper bound \eqref{eq14}. Thus, the schemes constructed in~\cite{ST19:JMDSPIR} are optimal for the storage pattern $\mathcal{P}=\ell I_M\otimes {\bf 1}_t$. Moreover, we obtain the following corollary:
\begin{corollary}\label{cor2} For the case of $M=2, N=nt, K=2t$ with $n\geq 3$ and the storage pattern $\mathcal{P}=(n-1) I_2\otimes {\bf 1}_t$,
\begin{equation*}\label{eq18}
C^{(s)}_{\mbox{\tiny\rm JMDS-PIR}}(M=2,N=nt,K=2t,\mathcal{P})= 1-(1-\frac{1}{M})\frac{K}{N}=1-\frac{1}{n}.
  \end{equation*}
For the  case of $M\geq 2,N=(M+1)t,K=Mt$ and  the storage pattern $\mathcal{P}=2 I_M\otimes {\bf 1}_t$,
\begin{equation*}\label{eq18}
C^{(s)}_{\mbox{\tiny\rm JMDS-PIR}}(M,N=(M+1)t,K=Mt,\mathcal{P})= 1-(1-\frac{1}{M})\frac{K}{N}=\frac{2}{M+1}.
  \end{equation*}
\end{corollary}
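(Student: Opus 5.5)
The corollary has two directions: an upper bound and a matching achievability. For the upper bound I would simply specialize Theorem~\ref{thm1}. In the first case, $M=2$, $t$ arbitrary, $K=2t$, $N=nt$ and $\ell=n-1$ give
\[
1-\Bigl(1-\tfrac12\Bigr)\tfrac{2t}{nt}=1-\tfrac1n .
\]
In the second case, $K=Mt$ and $N=(M+1)t$ give
\[
1-\tfrac{M-1}{M}\cdot\tfrac{M}{M+1}=1-\tfrac{M-1}{M+1}=\tfrac{2}{M+1}.
\]
Both storage patterns, $(n-1)I_2\otimes{\bf 1}_t$ and $2I_M\otimes{\bf 1}_t$, are of the form $\ell I_M\otimes{\bf 1}_t$ with $K=Mt$, so Theorem~\ref{thm1} applies directly. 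The only check is that the hypotheses match: the storage code is a systematic $(N,K;\ell)$ MDS array code, and $\ell$ is consistent with $ML=K\ell$. For the first case, $L=t(n-1)$ gives $2t(n-1)=2t(n-1)$. For the second, $L=2t$ gives $2M t=Mt\cdot 2$.

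For the lower bound, I would invoke the two constructions of \cite{ST19:JMDSPIR} recalled in Example~\ref{ex0}. Their storage codes and patterns are exactly the ones named in the corollary, so they are admissible in the maximization over $\mathcal C$ defining $C^{(s)}_{\mbox{\tiny JMDS-PIR}}$. I would quote their retrieval rates, which are $\frac{n-1}{n}$ for $(2,nt,2t)$ and $\frac{2}{M+1}$ for $(M,(M+1)t,Mt)$. As a sanity check, with file size $L=t(n-1)$ the first rate corresponds to a download of $tn$ symbols, and with $L=2t$ the second corresponds to a download of $(M+1)t=N$ symbols, one per server. Each scheme is correct and private for some particular systematic MDS array code $\mathcal C_0$, so
\[
C^{(s)}_{\mbox{\tiny JMDS-PIR}}\ \ge\ C_{\mbox{\tiny JMDS-PIR}}(M,N,K,\mathcal P,\mathcal C_0)\ \ge\ \text{rate}.
\]
Combining this with the upper bound gives equality in both cases.

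The only potential obstacle is confirming that the cited schemes satisfy the paper's exact privacy definition~\eqref{eq6} and correctness definition~\eqref{eq5} under the stated pattern. I would take this from \cite{ST19:JMDSPIR} as given, noting that their model coincides with ours once the pattern is identified as in Example~\ref{ex0}. Everything else is arithmetic.
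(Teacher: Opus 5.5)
Your proposal is correct and follows essentially the same route as the paper. The converse is Theorem~\ref{thm1} specialized to $K=Mt$ with $\mathcal P=\ell I_M\otimes{\bf 1}_t$, and achievability comes from the Sun--Tian schemes of \cite{ST19:JMDSPIR}, which use exactly these storage patterns (Example~\ref{ex0}) and attain rates $\frac{n-1}{n}$ and $\frac{2}{M+1}$.
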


\subsection{The upper bound for $K=Mt+1$ and $\mathcal{P}=\frac{L}{K}(MI_M\otimes {\bf 1}_t~{\bf 1}_M^{\intercal})$}

For the storage pattern $\mathcal{P}=\frac{L}{K}(MI_M\otimes {\bf 1}_t~{\bf 1}_M^{\intercal})$, one can know that   $W_i=({\bf c}_{(i-1)t+1},{\bf c}_{(i-1)t+2},\cdots,{\bf c}_{it},{\bf W}_{i,K})$ for all $i\in [M]$, and ${\bf c}_K=({\bf W}_{1,K},\cdots,{\bf W}_{M,K})$. Thus, it has  that for all
$\theta\in[M],\Lambda\subseteq[M]$,
\begin{equation}\label{eq19}
H(A^{\chi_\Lambda}_\theta|W_\Lambda,\mathcal{Q})=H(A^{\chi_\Lambda}_\theta|{\bf c}_{\chi_\Lambda},{\bf    c}_{K,\Lambda},\mathcal{Q})=0,
\end{equation}
 where $\chi_\Lambda=\{(i-1)t+j:i\in\Lambda,j\in[t]\}$.

\begin{lemma}\label{lem4}  For a joint systematic MDS-coded PIR scheme with the storage pattern $\mathcal{P}=\frac{L}{K}(MI_M\otimes {\bf 1}_t~{\bf 1}_M^{\intercal})$, for any subset $\Lambda\subsetneq[M],\theta\in \Lambda$, and any $\theta'\in [M]\setminus\Lambda$,
\begin{equation}\label{eq20}
H(A^{[N]}_\theta|{\bf c}_{\chi_\Lambda},{\bf c}_K,\mathcal{Q})\geq \frac{K-1-|\Lambda|t}{N-1-|\Lambda|t}(L-\frac{L}{K})+ \frac{K-1-|\Lambda|t}{N-1-|\Lambda|t} H(A^{[N]}_{\theta'}|{\bf c}_{\chi_{\Lambda\cup \{\theta'\}}},{\bf c}_K,\mathcal{Q}).
\end{equation}
\end{lemma}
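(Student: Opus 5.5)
The plan is to mirror the proof of Lemma~\ref{lem3}, with conditioning set $\Lambda_0\triangleq\chi_\Lambda\cup\{K\}$ in place of $\chi_\Lambda$. Note that $|\Lambda_0|=|\Lambda|t+1<K$ because $\Lambda\subsetneq[M]$. Also, since $\theta\in\Lambda$, the pair $({\bf c}_{\chi_\Lambda},{\bf c}_K)$ contains all of $W_\Lambda$, together with the pieces ${\bf W}_{i,K}$ for $i\notin\Lambda$.

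\emph{Step 1: discard the answers in $\chi_\Lambda\cup\{K\}$.} By \eqref{eq4}, the answers indexed by $\chi_\Lambda\cup\{K\}$ are deterministic given $({\bf c}_{\chi_\Lambda},{\bf c}_K,\mathcal{Q})$. Hence
\[
H(A^{[N]}_\theta|{\bf c}_{\chi_\Lambda},{\bf c}_K,\mathcal{Q})=H(A_\theta^{[N]\setminus\Lambda_0}|{\bf c}_{\chi_\Lambda},{\bf c}_K,\mathcal{Q}),
\]
and the same identity holds with $\theta'$ in place of $\theta$.

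\emph{Step 2: switch $\theta$ to $\theta'$.} Take any $\Gamma\subseteq[N]\setminus\Lambda_0$ with $|\Gamma|=K-1-|\Lambda|t$. Lemma~\ref{lem2} applies with $\Lambda_0$ as the conditioning set, so the answers in $\Gamma$ are conditionally independent and
\[
H(A^\Gamma_\theta|{\bf c}_{\Lambda_0},\mathcal{Q})=\sum_{i\in\Gamma}H(A^{(i)}_\theta|{\bf c}_{\Lambda_0},\mathcal{Q}).
\]
Each summand is then made $\theta$-independent via Lemma~\ref{lem1}. This is the main obstacle. Lemma~\ref{lem1} requires conditioning on whole files $W_\Lambda$, but ${\bf c}_K$ contains fragments ${\bf W}_{i,K}$ of files outside $\Lambda$, so the corollary \eqref{eq9} does not apply verbatim. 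I would argue from the full privacy condition \eqref{eq6}, which gives
\[
(Q^{(i)}_\theta,A^{(i)}_\theta,W_{[M]})\sim(Q^{(i)}_{\theta'},A^{(i)}_{\theta'},W_{[M]}).
\]
Because ${\bf c}_{\Lambda_0}$ is a deterministic function of $W_{[M]}$, this gives $H(A^{(i)}_\theta|{\bf c}_{\Lambda_0},Q^{(i)}_\theta)=H(A^{(i)}_{\theta'}|{\bf c}_{\Lambda_0},Q^{(i)}_{\theta'})$. The Markov step in the proof of Lemma~\ref{lem1} then carries over unchanged to replace $Q^{(i)}$ by $\mathcal{Q}$. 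Summing over $i\in\Gamma$ and reversing Lemma~\ref{lem2} yields $H(A^\Gamma_\theta|\cdot)=H(A^\Gamma_{\theta'}|\cdot)$.

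\emph{Step 3: Han's inequality.} Bound the left-hand side of \eqref{eq20} from below by the average over all such $\Gamma$ of $H(A^\Gamma_{\theta'}|\cdot)$. Han's inequality applied to the $N-1-|\Lambda|t$ answers outside $\Lambda_0$ gives
\[
H(A^{[N]}_\theta|{\bf c}_{\chi_\Lambda},{\bf c}_K,\mathcal{Q})\ge \frac{K-1-|\Lambda|t}{N-1-|\Lambda|t}\,H(A^{[N]}_{\theta'}|{\bf c}_{\chi_\Lambda},{\bf c}_K,\mathcal{Q}),
\]
where Step 1 has been used on the $\theta'$ side as well.

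\emph{Step 4: peel off $W_{\theta'}$.} Decompose
\[
H(A^{[N]}_{\theta'}|{\bf c}_{\chi_\Lambda},{\bf c}_K,\mathcal{Q})=I(W_{\theta'};A^{[N]}_{\theta'}|{\bf c}_{\chi_\Lambda},{\bf c}_K,\mathcal{Q})+H(A^{[N]}_{\theta'}|{\bf c}_{\chi_\Lambda},{\bf c}_K,W_{\theta'},\mathcal{Q}).
\]
For the mutual information term, correctness \eqref{eq5} and the independence in \eqref{eq3} make it equal to $H(W_{\theta'}|{\bf c}_{\chi_\Lambda},{\bf c}_K)$. Since ${\bf c}_K$ already reveals ${\bf W}_{\theta',K}$ of size $L/K$, and the files are independent and uniform, this equals $L-\frac{L}{K}$. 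For the second term, $({\bf c}_{\chi_\Lambda},{\bf c}_K,W_{\theta'})$ has the same information as $({\bf c}_{\chi_{\Lambda\cup\{\theta'\}}},{\bf c}_K)$. Substituting these two facts into Step 3 gives \eqref{eq20}.
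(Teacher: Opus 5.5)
Your proposal is correct and follows the paper's proof essentially step for step. You drop the answers indexed by $\chi_\Lambda\cup\{K\}$, average over the $(K-1-|\Lambda|t)$-subsets $\Gamma$, swap $\theta$ for $\theta'$, apply Han's inequality, and peel off $W_{\theta'}$ using $H(W_{\theta'}|{\bf c}_{\chi_\Lambda},{\bf c}_K)=L-\frac{L}{K}$. Your Step~2 also supplies a justification the paper omits: the paper simply asserts $H(A^\Gamma_\theta|{\bf c}_{\chi_\Lambda},{\bf c}_K,\mathcal{Q})=H(A^\Gamma_{\theta'}|{\bf c}_{\chi_\Lambda},{\bf c}_K,\mathcal{Q})$ even though \eqref{eq9} only covers conditioning on whole files, and your extension of Lemma~\ref{lem1} to conditioning on an arbitrary function of $W_{[M]}$, via the full privacy condition, is sound.
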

\begin{proof}
By \eqref{eq19}, we have \begin{equation}\label{eq21}H(A^{[N]}_\theta|{\bf c}_{\chi_\Lambda},{\bf c}_K,\mathcal{Q})
=H(A_\theta^{[N]\setminus(\chi_\Lambda\cup\{K\})}|{\bf c}_{\chi_\Lambda},{\bf c}_K,\mathcal{Q}).\end{equation}
Moreover, it has that $$H(A_\theta^{[N]\setminus(\chi_\Lambda\cup\{K\})}|{\bf c}_{\chi_\Lambda},{\bf c}_K,\mathcal{Q})\geq \frac{1}{\binom{N-1-|\Lambda|t}{K-1-|\Lambda|t}}
\sum_{\substack{\Gamma:|\Gamma|=K-1-|\Lambda|t,\\\Gamma\subseteq[N]\setminus(\chi_\Lambda\cup\{K\})}}
H(A^\Gamma_{\theta}|{\bf c}_{\chi_\Lambda},{\bf c}_K,\mathcal{Q}).$$
Combining $ H(A^\Gamma_\theta|{\bf c}_{\chi_\Lambda}, {\bf c}_K,\mathcal{Q})=H(A^\Gamma_{\theta'}|{\bf c}_{\chi_\Lambda}, {\bf c}_K,\mathcal{Q})$, \eqref{eq21} and Han's inequality \cite[Thm. 17.6.1]{CT12:EIT}, it holds that
 \begin{align*}
H(A^{[N]}_\theta|{\bf c}_{\chi_\Lambda},{\bf c}_K,\mathcal{Q})
\geq&
\frac{K-1-|\Lambda|t}{N-1-|\Lambda|t}H(A_{\theta'}^{[N]\setminus(\chi_\Lambda\cup\{K\})}|{\bf c}_{\chi_\Lambda},{\bf c}_K,\mathcal{Q})\\
=&\frac{K-1-|\Lambda|t}{N-1-|\Lambda|t}(H(W_{\theta'}|{\bf c}_{\chi_\Lambda},{\bf c}_K,\mathcal{Q})-
H(W_{\theta'}|{\bf c}_{\chi_{\Lambda}},{\bf c}_{K},\mathcal{Q},A^{[N]}_{\theta'})
+H(A^{[N]}_{\theta'}|{\bf c}_{\chi_{\Lambda\cup\{\theta'\}}},{\bf c}_{K},\mathcal{Q}))\\
\stackrel{{\rm (i)}}{=}&\frac{K-1-|\Lambda|t}{N-1-|\Lambda|t}(L-\frac{L}{K})+ \frac{K-1-|\Lambda|t}{N-1-|\Lambda|t} H(A^{[N]}_{\theta'}|{\bf c}_{\chi_{\Lambda\cup\{\theta'\}}},{\bf c}_K,\mathcal{Q}),
\end{align*}
where (i)  is due to the fact that $H(W_{\theta'}|{\bf c}_{\chi_\Lambda},{\bf c}_K,\mathcal{Q})=H(W_{\theta'}|{\bf W}_{\theta',K})=H(W_{\theta'})-H({\bf W}_{\theta',K})=L-\frac{L}{K}$, which follows from \eqref{eq5}, \eqref{eq3} and \eqref{eq1}.\end{proof}

Now, let us build the upper bound of capacity for this case.
\begin{theorem}\label{thm2}
For $K=Mt+1$ and the storage pattern matrix $\mathcal{P}=\frac{L}{K}(MI_M\otimes {\bf 1}_t~{\bf 1}_M^{\intercal})$,
\begin{equation}\label{eq22}
C^{(s)}_{\mbox{\tiny\rm JMDS-PIR}}(M,N,K,\mathcal{P})\leq \frac{K(N-K+t)}{(K-1)(N-1)},
  \end{equation}
\end{theorem}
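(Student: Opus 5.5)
The bound will follow by the same recursive argument as Theorem~\ref{thm1}, applied after first conditioning on the shared block ${\bf c}_K$. Recall that for this pattern ${\bf c}_K$ holds a $\frac{1}{K}$ fraction of every file. Once ${\bf c}_K$ is given, the remaining structure is exactly that of the case $K-1=Mt$ on an effective system of $N-1$ servers, with each file reduced to $L-\frac{L}{K}$ unknown symbols. So I expect the bound of Theorem~\ref{thm1}, with $(N,K)$ replaced by $(N-1,K-1)$, to appear, multiplied by the factor $\frac{K}{K-1}$.

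\textbf{Step 1 (reduce to ${\bf c}_K$).} Fix $\theta\in[M]$. Since the sum of entropies is at least the joint entropy, and conditioning reduces entropy,
\[
D\ \ge\ H(A^{[N]}_\theta)\ \ge\ H(A^{[N]}_\theta|\mathcal{Q})\ \ge\ H(A^{[N]}_\theta|{\bf c}_K,\mathcal{Q}).
\]
I then write
\[
H(A^{[N]}_\theta|{\bf c}_K,\mathcal{Q})=I(W_\theta;A^{[N]}_\theta|{\bf c}_K,\mathcal{Q})+H(A^{[N]}_\theta|W_\theta,{\bf c}_K,\mathcal{Q}).
\]
\begin{itemize}
\item By \eqref{eq3}, \eqref{eq1} and the independence of the files, $H(W_\theta|{\bf c}_K,\mathcal{Q})=L-\frac{L}{K}$.
\item By correctness \eqref{eq5}, $H(W_\theta|A^{[N]}_\theta,{\bf c}_K,\mathcal{Q})=0$.
\end{itemize}
Together these give $I(W_\theta;A^{[N]}_\theta|{\bf c}_K,\mathcal{Q})=L-\frac{L}{K}$, which is the same computation already used at step (i) of Lemma~\ref{lem4}. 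Since $(W_\theta,{\bf c}_K)$ determines $({\bf c}_{\chi_{\{\theta\}}},{\bf c}_K)$ and conversely, the second term equals $H(A^{[N]}_\theta|{\bf c}_{\chi_{\{\theta\}}},{\bf c}_K,\mathcal{Q})$.

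\textbf{Step 2 (recursion).} Apply Lemma~\ref{lem4} repeatedly, starting from $\Lambda=\{\theta\}$ and adding one new index $\theta'$ at each step until $\Lambda=[M]$. This gives
\[
H(A^{[N]}_\theta|{\bf c}_{\chi_{\{\theta\}}},{\bf c}_K,\mathcal{Q})\ge \Big(L-\tfrac{L}{K}\Big)\sum_{j=1}^{M-1}\prod_{i=1}^{j}\frac{K-1-it}{N-1-it}+\prod_{i=1}^{M-1}\frac{K-1-it}{N-1-it}\,H(A^{[N]}_{\theta''}|{\bf c}_{[K]},\mathcal{Q}).
\]
The last entropy vanishes by \eqref{eq4} and the MDS property, because ${\bf c}_{[K]}$ determines every ${\bf c}_i$. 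Combining with Step~1,
\[
D\ \ge\ \Big(L-\tfrac{L}{K}\Big)\Big(1+\sum_{j=1}^{M-1}\prod_{i=1}^{j}\frac{K'-it}{N'-it}\Big),\qquad K'=K-1=Mt,\ N'=N-1.
\]

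\textbf{Step 3 (evaluate the sum).} The bracket is exactly the expression telescoped in \eqref{eq010}, with $(N,K)$ replaced by $(N',K')$. Its value is therefore
\[
\Big(1-\big(1-\tfrac1M\big)\tfrac{K'}{N'}\Big)^{-1}=\frac{N-1}{N-1-(M-1)t}=\frac{N-1}{N-K+t}.
\]
Hence
\[
D\ \ge\ \frac{(K-1)(N-1)}{K(N-K+t)}\,L,
\]
and so $R_{\rm PIR}=L/D\le \frac{K(N-K+t)}{(K-1)(N-1)}$. This holds for every systematic $(N,K;\ell)$ code $\mathcal C$, which gives \eqref{eq22}.

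\textbf{Main obstacle.} The delicate point is Step~1. The loss from dropping ${\bf c}_K$ must be exactly $\frac{L}{K}$, and the conditioning set after Step~1 must be precisely $({\bf c}_{\chi_{\{\theta\}}},{\bf c}_K)$, so that it matches the form required by Lemma~\ref{lem4}. This in turn needs the independence of $\mathcal{Q}$ from all files and the fact that ${\bf c}_K$ consists of disjoint $\frac{L}{K}$-pieces of independent files. If the first inequality $H(A^{[N]}_\theta|\mathcal{Q})\ge H(A^{[N]}_\theta|{\bf c}_K,\mathcal{Q})$ were too lossy, the alternative would be to apply Han's inequality directly to ${\bf c}_K$. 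However, the computation above already reproduces the claimed bound, so I would commit to this route.
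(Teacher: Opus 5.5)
Your proposal is correct and follows essentially the same route as the paper. Both arguments lower-bound $D$ by $H(A^{[N]}_\theta|{\bf c}_K,\mathcal{Q})$, split off $L-\frac{L}{K}$ as in \eqref{eq24}, iterate Lemma~\ref{lem4} down to ${\bf c}_{[K]}$, and evaluate the resulting sum via the telescoping identity \eqref{eq010} with $(N,K)$ replaced by $(N-1,K-1)$, using $N-1-(M-1)t=N-K+t$ to reach $\frac{K(N-K+t)}{(K-1)(N-1)}$.
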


\begin{proof}
 By the definition of the capacity of joint systematic MDS-coded PIR, it is sufficient to show that for any joint MDS-coded PIR scheme with systematic storage code  $\mathcal{C}$ and storage pattern $\mathcal{P}$, its retrieval rate
 $R_{\rm PIR}=\frac{L}{D}\leq \frac{K(N-K+t)}{(K-1)(N-1)}.$

For $\theta\in[M]$, we prove that \begin{equation}\label{eq23}H(A^{[N]}_{\theta}|{\bf c}_K,\mathcal{Q})\geq L(1-\frac{1}{K}) \frac{N-1}{N-K+t}.\end{equation}

Similarly to \eqref{eq16},
according to  \eqref{eq5} and \eqref{eq19},
 one can obtain that \begin{equation}\label{eq24}
L(1-\frac{1}{K})=H(W_\theta|{\bf c}_K)
=H(A^{[N]}_\theta|{\bf c}_K,\mathcal{Q})-H(A^{[N]}_\theta|{\bf c}_K, W_\theta,\mathcal{Q})
=H(A^{[N]}_\theta|{\bf c}_K,\mathcal{Q})-H(A^{[N]}_\theta|{\bf c}_{\chi_{\{\theta\}}},{\bf c}_K,\mathcal{Q}).
\end{equation} By recursively using \eqref{eq20} in Lemma \ref{lem4}, \eqref{eq24}, and \eqref{eq010}, we have
\begin{align*}
H(A^{[N]}_\theta|{\bf c}_K,\mathcal{Q})\geq L(\frac{K-1}{K})(1+\sum^{M-1}_{j=1}\prod^j_{i=1}\frac{K-1-it}{N-1-it})
=L(\frac{K-1}{K}) \frac{N-1}{N-1-(M-1)t}.
\end{align*}
 By \eqref{eq23}, one can know that its  retrieval rate
\begin{align*}
R_{\rm PIR}&=\frac{L}{D}=\frac{L}{\sum^{N}_{i=1}H(A^{(i)}_\theta)}
\leq\frac{L}{H(A^{[N]}_\theta|{\bf c}_K,\mathcal{Q})}
\leq \frac{K(N-K+t)}{(K-1)(N-1)}.
\end{align*}\end{proof}
\begin{remark} According to the proof, one can find that a necessary condition for  equality to hold in \eqref{eq22} is that $H(A^{(K)}_\theta)=0$ for every capacity-achieving scheme. However, this conflicts with the design principle for almost all known capacity-achieving non-colluding PIR schemes: the greedy principle, that is, the user downloads data from each server as efficiently as possible with guaranteed privacy. This suggests that the upper bound in \eqref{eq22} may
not be tight in general.
\end{remark}

\section{Constructions for Joint Systematic MDS-Coded PIR Schemes }
 In this section,
 we construct three joint systematic MDS-coded PIR schemes for the cases $N\leq K+t, K=Mt$; $N>K+t, K=Mt$; and $N\leq K+t, K=Mt+1$.
Moreover, for the case $K=Mt, N\leq K+t$, the first construction achieves the upper bound in \eqref{eq14}.
  \subsection{Examples for $(M,N,K=Mt)$}\label{sec4A}

  To illustrate the main idea, we begin with two examples. The first is for the case of $N-t\leq K$ and the second is for the case of $N-t> K$.
\begin{example}\label{ex2} Let $N=5,K=4,M=2,L=6, \ell=3$, i.e., $t=2$ and $N<K+t$. Suppose $W_1=(a_{1,1},a_{1,2},a_{1,3},a_{2,1},a_{2,2},a_{2,3})=({\bf c}_1,{\bf c}_2)$, $W_2=(b_{1,1},b_{1,2},b_{1,3},b_{2,1},b_{2,2},b_{2,3})=({\bf c}_3,{\bf c}_4)\in\mathbb{F}^6_2$.
Moreover, let ${\bf c}_5=\sum^4_{i=1}{\bf c}_i$.
 ${\rm Serv}^{(i)}$ stores the data ${\bf c}_i$ for all $i\in[5]$. Hence, one can directly recover the two files from any four servers.  Without loss of generality, suppose the user wants to retrieve the file $W_1$. The PIR scheme works as follows.

 First, let $\sigma $ be a permutation  of $\{1,2,3\}$ privately chosen by the user  uniformly from  the symmetric group ${\rm S}_3$,  and define $S=({\bf e}^\intercal_{\sigma(1)},{\bf e}^\intercal_{\sigma(2)},{\bf e}^\intercal_{\sigma(3)})\in\mathbb{F}_2^{3\times 3}$, where ${\bf e}_i$ is the $i$th unit vector in $\mathbb{F}_2^3$. We use the random matrix $S$ to  construct the queries as follows:
  $Q^{(1)}_1=S(:,\{1,3\}), Q^{(2)}_1=S(:,\{2,3\}),Q^{(3)}_1=Q^{(4)}_1=Q^{(5)}_1=S(:,\{1,2\}).$
  Then, the answers are determined by $A^{(i)}_1={\bf c}_iQ^{(i)}_1$, which are shown in Figure \ref{fg01}.
\begin{figure}[ht]
	\centering
	\setlength{\abovecaptionskip}{0.2cm}
	\setlength{\belowcaptionskip}{-0.2cm}
	\begin{minipage}[t]{0.48\textwidth}
		\centering
		\resizebox{\linewidth}{!}{
			\renewcommand{\arraystretch}{1.2} 
			\begin{tabular}{|c|c|c|c|c|}
				\hline
				$\rm{Serv}^{(1)}$ & $\rm{Serv}^{(2)}$ & $\rm{Serv}^{(3)}$ & $\rm{Serv}^{(4)}$ & $\rm{Serv}^{(5)}$\\
				\hline
				$a_{1,\sigma(1)}$ &                   & $b_{1,\sigma(1)}$ & $b_{2,\sigma(1)}$ & $a_{1,\sigma(1)}+a_{2,\sigma(1)}+b_{1,\sigma(1)}+b_{2,\sigma(1)}$\\
				& $a_{2,\sigma(2)}$ & $b_{1,\sigma(2)}$ & $b_{2,\sigma(2)}$ & $a_{1,\sigma(2)}+a_{2,\sigma(2)}+b_{1,\sigma(2)}+b_{2,\sigma(2)}$ \\
				$a_{1,\sigma(3)}$ & $a_{2,\sigma(3)}$ &                   &                   & \\
				\hline
			\end{tabular}
		}
		\caption{Answers for retrieving $W_1$}
		\label{fg01}
	\end{minipage}\hfill
	\begin{minipage}[t]{0.48\textwidth}
		\centering
		\resizebox{\linewidth}{!}{
			\renewcommand{\arraystretch}{1.2}
			\begin{tabular}{|c|c|c|c|c|}
				\hline
				$\rm{Serv}^{(1)}$ & $\rm{Serv}^{(2)}$ & $\rm{Serv}^{(3)}$ & $\rm{Serv}^{(4)}$ & $\rm{Serv}^{(5)}$\\
				\hline
				$a_{1,\tau(1)}$   & $a_{2,\tau(1)}$   & $b_{1,\tau(1)}$   &                   & $a_{1,\tau(1)}+a_{2,\tau(1)}+b_{1,\tau(1)}+b_{2,\tau(1)}$\\
				$a_{1,\tau(2)}$   & $a_{2,\tau(2)}$   &                   & $b_{2,\tau(2)}$   & $a_{1,\tau(2)}+a_{2,\tau(2)}+b_{1,\tau(2)}+b_{2,\tau(2)}$ \\
				&                   & $b_{1,\tau(3)}$   & $b_{2,\tau(3)}$   & \\\hline
			\end{tabular}
		}
		\caption{Answers for retrieving $W_2$, where $\tau\sim U({\rm S}_3)$.}
		\label{fg02}
	\end{minipage}
	
\end{figure}

Next, we explore the correctness of the scheme. From Figure \ref{fg01}, the user can  obtain $a_{1,\sigma(1)},a_{1,\sigma(3)},a_{2,\sigma(2)},a_{2,\sigma(3)}$. Moreover,  $a_{1,\sigma(2)}$  and $a_{2,\sigma(1)}$ can be directly recovered from downloaded $4$ symbols in each of the first two rows of Figure \ref{fg01}, respectively.
Hence, the correctness of this scheme is guaranteed.

As to the privacy, we show the answers for respectively retrieving $W_1$ and $W_2$ are identically distributed. We first list the answers for retrieving $W_2$ in Figure \ref{fg02}, where  $\tau \sim U({\rm S}_3)$ means that $\tau$ is uniformly chosen from  the symmetric group ${\rm S}_3$. According to Figures \ref{fg01} and \ref{fg02},  one can find that no matter which file is desired, for each server, the answers are all two independent symbols which are  uniformly chosen from three independent symbols, thus they are identically distributed.

Moreover, the desired file consists of $6$ symbols while the answers totally contain $10$ symbols in Figure \ref{fg01}, so the scheme has
rate $\frac{6}{10}=\frac{3}{5}=\frac{N-K/2}{N}$, which  achieves the upper bound \eqref{eq14} for this case and is strictly higher than the capacity of the separate MDS-coded PIR  $C_\oplus(2,5,4)=\frac{1}{1+4/5}=\frac{5}{9}$.
\end{example}

\begin{example}\label{ex3} Let $N=7,K=4,M=2,L=18, \ell=9$, i.e., $t=2$ and $N>K+t$. Suppose  $W_1=({\bf c}_1,{\bf c}_2),W_2=({\bf c}_3,{\bf c}_4)\in\mathbb{F}^{18}_q$ with $q\geq 7$, and $\tilde{G}$ is a  systematic generator matrix of a
$[7,4]$ MDS code over $\mathbb{F}_q$. Let ${\bf c}_j=(c_{j,1},c_{j,2},\cdots,c_{j,9})\in\mathbb{F}^9_q$ be the data stored in the server ${\rm Serv}^{(j)}$ for $j\in[7]$. Moreover, for $i\in[9]$, $(c_{1,i},c_{2,i},\cdots,c_{7,i})$ is a codeword of $[7,4]$ MDS code by a systematic generator matrix  $\tilde{G}$, i.e., $
({\bf c}_1,\cdots,{\bf c}_7)=({\bf c}_1,\cdots,{\bf c}_4)\tilde{G}\otimes I_{9}.$
Without loss of generality, suppose the user wants to retrieve $W_1$. The PIR scheme works as follows.

 First, let $\sigma $ be a permutation  of $[9]$ privately chosen by the user  uniformly from  the symmetric group ${\rm S}_9$,  and define $S=({\bf e}^\intercal_{\sigma(1)},{\bf e}^\intercal_{\sigma(2)},\cdots,{\bf e}^\intercal_{\sigma(9)})\in\mathbb{F}_q^{9\times 9}$, where ${\bf e}_i$ is the $i$th unit vector in $\mathbb{F}_q^9$. We use the random matrix $S$ to  construct the queries as follows:
  \begin{align*}Q^{(1)}_1&= Q^{(2)}_1=S(:,[4]), Q^{(3)}_1=S(:,[5:9]\setminus\{6\}), Q^{(4)}_1=S(:,[5:9]\setminus\{7\}),\\Q^{(5)}_1&=S(:,[5:9]\setminus\{8\}),Q^{(6)}_1=S(:,[5:9]\setminus\{9\}),Q^{(7)}_1=S(:,[5:9]\setminus\{5\}).
  \end{align*}Then, the answers are determined by $A^{(i)}_1={\bf c}_iQ^{(i)}_1$, which are shown in Figure \ref{fg001}.
\begin{figure}[ht]
	\centering
	\setlength{\abovecaptionskip}{0.2cm}
	\setlength{\belowcaptionskip}{-0.2cm}

	\begin{minipage}[t]{0.48\textwidth}
		\centering
		\resizebox{\linewidth}{!}{
			\renewcommand{\arraystretch}{1.2} 
			\begin{tabular}{|c|c|c|c|c|c|c|}
				\hline
				$\rm{Serv}^{(1)}$ & $\rm{Serv}^{(2)}$ & $\rm{Serv}^{(3)}$ & $\rm{Serv}^{(4)}$ & $\rm{Serv}^{(5)}$ & $\rm{Serv}^{(6)}$ & $\rm{Serv}^{(7)}$\\
				\hline
				$c_{1,\sigma(1)}$ & $c_{2,\sigma(1)}$ &                   &                   &                   &                   & \\
				$c_{1,\sigma(2)}$ & $c_{2,\sigma(2)}$ &                   &                   &                   &                   & \\
				$c_{1,\sigma(3)}$ & $c_{2,\sigma(3)}$ &                   &                   &                   &                   & \\
				$c_{1,\sigma(4)}$ & $c_{2,\sigma(4)}$ &                   &                   &                   &                   & \\
				&                   & $c_{3,\sigma(5)}$ & $c_{4,\sigma(5)}$ & $c_{5,\sigma(5)}$ & $c_{6,\sigma(5)}$ & \\
				&                   &                   & $c_{4,\sigma(6)}$ & $c_{5,\sigma(6)}$ & $c_{6,\sigma(6)}$ & $c_{7,\sigma(6)}$\\
				&                   & $c_{3,\sigma(7)}$ &                   & $c_{5,\sigma(7)}$ & $c_{6,\sigma(7)}$ & $c_{7,\sigma(7)}$\\
				&                   & $c_{3,\sigma(8)}$ & $c_{4,\sigma(8)}$ &                   & $c_{6,\sigma(8)}$ & $c_{7,\sigma(8)}$\\
				&                   & $c_{3,\sigma(9)}$ & $c_{4,\sigma(9)}$ & $c_{5,\sigma(9)}$ &                   & $c_{7,\sigma(9)}$\\
				\hline
			\end{tabular}
		}
		\caption{Answers for retrieving $W_1$}
		\label{fg001}
	\end{minipage}\hfill
	\begin{minipage}[t]{0.48\textwidth}
		\centering
		\resizebox{\linewidth}{!}{
			\renewcommand{\arraystretch}{1.2} 
			\begin{tabular}{|c|c|c|c|c|c|c|}
				\hline
				$\rm{Serv}^{(1)}$ & $\rm{Serv}^{(2)}$ & $\rm{Serv}^{(3)}$ & $\rm{Serv}^{(4)}$ & $\rm{Serv}^{(5)}$ & $\rm{Serv}^{(6)}$ & $\rm{Serv}^{(7)}$\\
				\hline
				&                   & $c_{3,\tau(1)}$   & $c_{4,\tau(1)}$   &                   &                   & \\
				&                   & $c_{3,\tau(2)}$   & $c_{4,\tau(2)}$   &                   &                   & \\
				&                   & $c_{3,\tau(3)}$   & $c_{4,\tau(3)}$   &                   &                   & \\
				&                   & $c_{3,\tau(4)}$   & $c_{4,\tau(4)}$   &                   &                   & \\
				$c_{1,\tau(5)}$   & $c_{2,\tau(5)}$   &                   &                   & $c_{5,\tau(5)}$   & $c_{6,\tau(5)}$   & \\
				& $c_{2,\tau(6)}$   &                   &                   & $c_{5,\tau(6)}$   & $c_{6,\tau(6)}$   & $c_{7,\tau(6)}$\\
				$c_{1,\tau(7)}$   &                   &                   &                   & $c_{5,\tau(7)}$   & $c_{6,\tau(7)}$   & $c_{7,\tau(7)}$\\
				$c_{1,\tau(8)}$   & $c_{2,\tau(8)}$   &                   &                   &                   & $c_{6,\tau(8)}$   & $c_{7,\tau(8)}$\\
				$c_{1,\tau(9)}$   & $c_{2,\tau(9)}$   &                   &                   & $c_{5,\tau(9)}$   &                   & $c_{7,\tau(9)}$\\
				\hline
			\end{tabular}
		}
		\caption{Answers for retrieving $W_2$, where $\tau \sim U({\rm S}_9)$.}
		\label{fg002}
	\end{minipage}
	
\end{figure}

Next, we explore the correctness of the scheme. From Figure \ref{fg001}, the user can directly  obtain $\{c_{1,\sigma(i)},c_{2,\sigma(i)}:i\in[4]\}$. Moreover, the remaining symbols $c_{1,\sigma(i)}, c_{2,\sigma(i)}$ for $i\in[5,9]$ can be recovered from the $4$ symbols in the $i$th row of Figure \ref{fg001}, respectively.
Hence, the correctness of this scheme is guaranteed.

As to the privacy, we show the answers for respectively retrieving $W_1$ and $W_2$ are identically distributed. We first list the answers for retrieving $W_2$ in Figure \ref{fg002}.
According to Figures \ref{fg001} and \ref{fg002},  one can find that no matter which file is desired, for each server, the answers are  $4$ independent symbols, thus they are identically distributed.

Moreover, the desired file consists of 18 symbols, and the total download consists of $4\times 7=28$ symbols in Figure \ref{fg001}. Hence the retrieval rate is $18/28=9/14<1-\frac{1}{2}\times\frac{4}{7}=\frac{5}{7}$. Although the rate has not reached the upper bound \eqref{eq14}, it is strictly higher than the capacity of the separate MDS-coded PIR  $C_\oplus(2,7,4)=\frac{1}{1+4/7}=\frac{7}{11}$, that is, $\frac{9}{14}>\frac{7}{11}$.
\end{example}

\subsection{Construction  for $(M,N, K=Mt)$}

Based on examples in Section \ref{sec4A},  we first present the storage code and then show a formal description of the general joint systematic MDS-coded PIR schemes for the cases $(M,N\leq K+t, K=Mt)$ and $(M,N> K+t, K=Mt)$, respectively.

Let $W_{i}=({\bf c}_{(i-1)t+1},\cdots, {\bf c}_{it})\in \mathbb{F}_q^{L}$ for $i\in[M]$ and the server ${\rm Serv}^{(j)},j\in[N]$ stores the  data ${\bf c}_j=(c_{j,1},c_{j,2},\cdots,c_{j,\ell})$. Moreover, each row $(c_{1,i},c_{2,i},\cdots,c_{N,i}), i\in[\ell]$ is a codeword of  an
$[N,K]$ MDS code $\tilde{\mathcal{C}}$ over $\mathbb{F}_q$ with $q\geq N$. That is, the joint storage  strategy is as follows:
\begin{equation}\label{eq25}({\bf c}_1,\cdots,{\bf c}_N)=({\bf c}_1,\cdots,{\bf c}_K)\tilde{G}\otimes I_{\ell},\end{equation}
where $\tilde{G}$ is  a  systematic generator matrix  of the code $\tilde{\mathcal{C}}$.
The parameters $L$ and $\ell$ will be determined in the following constructions of the joint MDS-coded PIR schemes.

 We first build a scheme for $(M,N\leq K+t, K=Mt)$ in {\bf Construction 1}.

\noindent
\rule{\linewidth}{1pt}
\vspace{-0.3em}
\noindent
\textbf{Construction 1: Joint Systematic MDS-coded PIR Scheme for $(M,N\leq K+t, K=Mt)$}

\vspace{-0.3em}
\noindent
\rule{\linewidth}{0.5pt}

Suppose the storage code  $\mathcal{C}$ is a systematic $(N,K;\ell)$  MDS array code defined by \eqref{eq25} over $\mathbb{F}_q$ with $q \geq N$, and the storage pattern $\mathcal{P}=\ell I_M\otimes {\bf 1}_t$, they are publicly known. Moreover, let $L=t(N-K+t)$ and $\ell=N-K+t$. The goal is to privately retrieve the file $W_\theta$ for any $\theta\in[M]$.
\begin{itemize}
\item \textbf{Preparation:} Let $P=({\bf u}^\intercal_{2},{\bf u}^\intercal_{3},\cdots,{\bf u}^\intercal_{t},{\bf u}^\intercal_{1})\in\mathbb{F}_q^{t\times t}$ and ${\bf v}=({\bf 1}_{K+t-N},{\bf 0}_{N-K})\in \mathbb{F}_q^t$, define $$V=\begin{pmatrix}{\bf v}^\intercal&
 ({\bf v}P)^\intercal&\cdots&({\bf v}P^{t-1})^\intercal\end{pmatrix}^\intercal,$$ where ${\bf u}_{j}$ is the $j$th unit vector in $\mathbb{F}_q^t$.
\item \textbf{Query Phase:} The user privately and uniformly chooses a  permutation $\sigma$ of $[\ell]$  from  the symmetric group ${\rm S}_\ell$. Define the random matrix $S=({\bf e}^\intercal_{\sigma(1)},{\bf e}^\intercal_{\sigma(2)},\cdots,{\bf e}^\intercal_{\sigma(\ell)})\in\mathbb{F}_q^{\ell\times \ell}$, where ${\bf e}_i$ is the $i$th unit vector in $\mathbb{F}_q^\ell$.
Then the user utilizes the random matrix $S$ and public matrix $V$ to  construct the queries as follows:
  \begin{equation*}
  Q^{(j)}_\theta= \begin{cases}S(:,[t]) & {\rm if ~} j\in[N]\setminus[(\theta-1)t+1:\theta t],\\
  S(:,{\rm supp}(V(:,i))\cup[t+1:\ell]) & {\rm if ~} j=(\theta-1)t+i, i\in[t],
  \end{cases}\end{equation*}
  where for a vector ${\bf a}=(a_1,a_2,\cdots,a_n)$,  ${\rm supp}({\bf a})=\{i\in[n]:a_i\neq 0\}$, which is the support of ${\bf a}$.
\item \textbf{Response Phase:} For $\theta\in[M],j\in[N]$,
the answer $A^{(j)}_\theta$ is determined by  \begin{align}
  A^{(j)}_\theta=&{\bf c}_jQ^{(j)}_\theta \notag\\
  =&\begin{cases}\{c_{j,\sigma(x)}:x\in{\rm supp}(V(:,i))\cup[t+1:\ell]\} & {\rm if~} j=(\theta-1)t+i,i\in[t],\\
  \{c_{j,\sigma(x)}:x\in[t]\} & {\rm otherwise.}
  \end{cases} \label{eq26}
  \end{align}
\end{itemize}

\vspace{-0.3em}
\noindent
\rule{1\linewidth}{1pt}
 Now, let us show the correctness and privacy of the scheme in {\bf Construction 1} by the following theorem:
\begin{theorem}\label{thm11} The scheme in {\bf Construction 1} is a joint systematic  MDS-coded PIR scheme  for the case of $(M,N\leq K+t,K=Mt)$ with the   retrieval rate $R_{\rm PIR}=1-\frac{M-1}{M}\cdot \frac{K}{N}$. Consequently, for the storage pattern $\mathcal{P}=\ell I_M\otimes {\bf 1}_t$,
\begin{equation}\label{eq27}
C^{(s)}_{\mbox{\tiny\rm JMDS-PIR}}(M,N,K=Mt,\mathcal{P})= 1-(1-\frac{1}{M})\frac{K}{N}.
  \end{equation}
\end{theorem}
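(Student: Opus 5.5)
The plan is to verify the three defining properties of the scheme in \textbf{Construction 1}: correctness, privacy, and the rate. Combining the rate with the upper bound \eqref{eq14} of Theorem~\ref{thm1} then gives \eqref{eq27}. Throughout, write $\theta$ for the desired index and call $\mathcal{D}=[(\theta-1)t+1:\theta t]$ the set of \emph{desired servers}. By the storage pattern $\mathcal{P}=\ell I_M\otimes{\bf 1}_t$, these are exactly the servers storing $W_\theta=({\bf c}_j)_{j\in\mathcal{D}}$.

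\textbf{Structure of $V$.} $P$ is the cyclic shift permutation matrix, so $V$ is a $t\times t$ circulant $0/1$ matrix whose first row ${\bf v}$ has weight $K+t-N$. This weight is nonnegative precisely because $N\le K+t$, and it is at most $t$ because $N\ge K$. Hence every row and every column of $V$ has exactly $K+t-N$ nonzero entries. This double regularity is the key combinatorial fact. I expect it to be the only place needing care, namely checking that the column supports ${\rm supp}(V(:,i))\subseteq[t]$ have size $K+t-N$ and that each $x\in[t]$ lies in exactly $K+t-N$ of them.

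\textbf{Correctness.} Index the $\ell=N-K+t$ "layers" of the array code by $x\in[\ell]$; layer $\sigma(x)$ is the codeword $(c_{1,\sigma(x)},\dots,c_{N,\sigma(x)})$ of the $[N,K]$ MDS code $\tilde{\mathcal C}$.
\begin{itemize}
\item For $x\in[t+1:\ell]$, every desired server returns $c_{j,\sigma(x)}$ directly by \eqref{eq26}, so these symbols of $W_\theta$ are obtained for free.
\item For $x\in[t]$, the $N-t$ non-desired servers return $c_{j,\sigma(x)}$. In addition, desired server $(\theta-1)t+i$ returns it whenever $V(x,i)\ne0$, which happens for $K+t-N$ values of $i$ by row regularity.
\item So for each $x\in[t]$ the user holds $(N-t)+(K+t-N)=K$ distinct coordinates of the codeword in layer $\sigma(x)$. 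By the MDS property of $\tilde{\mathcal C}$, the whole layer is recovered, in particular all $c_{j,\sigma(x)}$ with $j\in\mathcal{D}$.
\end{itemize}
Thus all of $({\bf c}_j)_{j\in\mathcal D}=W_\theta$ is recovered.

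\textbf{Privacy.} For every server $j$ and every $\theta$, $Q^{(j)}_\theta=S(:,X)$ for some fixed set $X\subseteq[\ell]$ with $|X|=t$. For $j\in\mathcal D$ this uses column regularity: $|{\rm supp}(V(:,i))|+(\ell-t)=(K+t-N)+(N-K)=t$. Since $\sigma$ is uniform on ${\rm S}_\ell$ and independent of the files, the ordered tuple $(\sigma(x))_{x\in X}$ is uniformly distributed over injective maps of a $t$-set into $[\ell]$, whatever $X$ is. Hence the distribution of $Q^{(j)}_\theta$ does not depend on $\theta$, and it is independent of $W_{[M]}$. Because $A^{(j)}_\theta={\bf c}_jQ^{(j)}_\theta$ is a deterministic function of $({\bf c}_j,Q^{(j)}_\theta)$, the tuple $(Q^{(j)}_\theta,A^{(j)}_\theta,{\bf c}_j,W_{[M]})$ has the same law for all $\theta$, which gives \eqref{eq6}.

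\textbf{Rate and capacity.}
\begin{itemize}
\item Each server returns $t$ symbols, so $D=Nt$, while $L=t(N-K+t)$. Therefore $R_{\rm PIR}=\frac{N-K+t}{N}=1-\frac{K-t}{N}$.
\item Substituting $t=K/M$ gives $R_{\rm PIR}=1-(1-\frac1M)\frac KN$.
\item This rate is achievable under the pattern $\mathcal P=\ell I_M\otimes{\bf 1}_t$, so it lower-bounds $C^{(s)}_{\mbox{\tiny JMDS-PIR}}$. It coincides with the upper bound \eqref{eq14} of Theorem~\ref{thm1}, which yields the equality \eqref{eq27}.
\end{itemize}
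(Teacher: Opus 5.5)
Your proposal is correct and takes essentially the same route as the paper. Correctness comes from the row weight $K+t-N$ of the circulant $V$, which gives $(N-t)+(K+t-N)=K$ coordinates of each MDS layer $\sigma(x)$, $x\in[t]$; privacy comes from the column weight, which gives exactly $t$ uniformly permuted symbols per server; and the rate $t(N-K+t)/(Nt)$ matches the bound \eqref{eq14}. Your privacy argument, that the law of $S(:,X)$ depends only on $|X|$ and is independent of $W_{[M]}$, is just a more explicit form of the paper's observation that each server sees $t$ uniformly random symbols out of its $\ell$.
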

  \begin{proof}We first prove the correctness of  {\bf Construction 1}.  According to $W_\theta=({\bf c}_{(\theta-1)t+1},{\bf c}_{(\theta-1)t+2},\cdots,{\bf c}_{\theta t})$ and \eqref{eq26}, the user can directly  obtain the desired symbols  $\{c_{(\theta-1)t+i,\sigma(x)}:i\in[t],x\in[t+1:\ell]\}$.  By \eqref{eq25}, it has that for each $x\in[t]$,   $(c_{1,\sigma(x)},c_{2,\sigma(x)},\cdots,c_{N,\sigma(x)})$ is a codeword of the $[N,K]$ MDS code  $\tilde{C}$. Hence,  the user needs to know $K$ symbols of such  codeword  for recovering the symbols $\{c_{(\theta-1)t+i,\sigma(x)}:i\in[t]\}$ for $x\in[t]$. Actually, for each $x\in[t]$, the user can  directly download $N-t$ symbols $\{c_{j,\sigma(x)}:j\in[N]\setminus[(\theta-1)t+1:\theta t]\}$  from the servers $\{{\rm Serv}^{(j)}: j\in[N]\setminus[(\theta-1)t+1:\theta t]\}$. Moreover,  for each $x\in[t]$, he also can download  symbols $\{c_{(\theta-1)t+i,\sigma(x)}:x\in {\rm supp}(V(:,i)),i\in[t]\}$ from the remaining $t$
servers.  one can find that $x\in {\rm supp}(V(:,i))$ for $x\in[t]$ means that  $V(x,i)\neq 0$, hence
$$|\{c_{(\theta-1)t+i,\sigma(x)}:x\in {\rm supp}(V(:,i)),i\in[t]\}|=|{\rm supp}(V(x,:))|=K-N+t.$$ So, the user downloads $(N-t)+(K-N+t)=K$ symbols  of the codeword $(c_{1,\sigma(x)},c_{2,\sigma(x)},\cdots,c_{N,\sigma(x)})$ from all servers for each $x \in [t]$. Hence, the correctness of the scheme in {\bf Construction 1}  is guaranteed.

As to the privacy, when the user retrieves the file $W_\theta$, for server ${\rm Serv}^{(j)},j\in[N]\setminus[(\theta-1)t+1:\theta t]$, the answers $A^{(j)}_\theta$ contain $t$ symbols.  For $j\in[(\theta-1)t+1:\theta t]$, the answers $A^{(j)}_\theta$ contains $|{\rm supp}(V(:,j-(\theta-1)t))|+\ell-t$ symbols. According to the definition of $V$, one can find that the Hamming weight of each column in $V$ equals to  that of each row of $V$, i.e., $|{\rm supp}(V(:,i))|=|{\rm supp}({\bf v})|=K+t-N$. Hence, $|{\rm supp}(V(:,j-(\theta-1)t))|+\ell-t=K+t-N+\ell-t=t.$
Therefore, no matter which file is retrieved, each server only knows that the user uniformly and randomly downloaded $t$ symbols from its stored $\ell$ symbols, that is, the answers $A^{(j)}_\theta$ for each $j\in[N]$ are independent of $\theta$. So, the privacy of this scheme is guaranteed.

Moreover, such a scheme downloads a total of $D=tN$ symbols, so its  retrieval rate $R_{\rm PIR}=\frac{L}{D}=\frac{t(t+N-K)}{Nt}=1-\frac{M-1}{M}\cdot \frac{K}{N}$, which achieves the upper bound \eqref{eq14}. Hence, \eqref{eq27} holds for the case of $(M,N\leq K+t, K=Mt)$.
\end{proof}
Next, we build a scheme for the case of $(M,N> K+t, K=Mt)$ in {\bf Construction 2}.

\noindent
\rule{\linewidth}{1pt}
\vspace{-0.3em}
\noindent
\textbf{Construction 2: Joint Systematic MDS-coded PIR Scheme for $(M,N> K+t, K=Mt)$}

\vspace{-0.3em}
\noindent
\rule{\linewidth}{0.5pt}

Suppose the storage code  $\mathcal{C}$ is a systematic $(N,K;\ell)$  MDS array code  defined by \eqref{eq25} over $\mathbb{F}_q$ with $q \geq N$, and the storage pattern $\mathcal{P}=\ell I_M\otimes {\bf 1}_t$, they are publicly known.  Moreover, let $L=t(N+K-t)$ and $\ell=N+K-t$. The goal is to privately retrieve the file $W_\theta$ for any $\theta\in[M]$.
\begin{itemize}
\item \textbf{Preparation:} let $P=({\bf u}^\intercal_{2},{\bf u}^\intercal_{3},\cdots,{\bf u}^\intercal_{N-t},{\bf u}^\intercal_{1})\in\mathbb{F}_q^{(N-t)\times (N-t)}$ and ${\bf v}=({\bf 1}_{K},{\bf 0}_{N-t-K})\in \mathbb{F}_q^{N-t}$, define $$V=\begin{pmatrix}{\bf v}^\intercal&
 ({\bf v}P)^\intercal&\cdots&({\bf v}P^{N-t-1})^\intercal\end{pmatrix}^\intercal,$$ where ${\bf u}^\intercal_{j}$ is the $j$th unit vector in $\mathbb{F}_q^{N-t}$.
\item \textbf{Query Phase:} The user privately and uniformly chooses a  permutation $\sigma$ of $[\ell]$  from  the symmetric group ${\rm S}_\ell$. Define the random matrix $S=({\bf e}^\intercal_{\sigma(1)},{\bf e}^\intercal_{\sigma(2)},\cdots,{\bf e}^\intercal_{\sigma(\ell)})\in\mathbb{F}_q^{\ell\times \ell}$, where ${\bf e}_i$ is the $i$th unit vector in $\mathbb{F}_q^\ell$.
Then the user utilizes the random matrix $S$ and public matrix $V$ to  construct the queries as follows:
  \begin{equation*}\label{eq28}
  Q^{(j)}_\theta= \begin{cases}S(:,{\rm supp}(V(:,j))+K) & {\rm if ~} j\in[(\theta-1)t],\\
  S(:,[K]) & {\rm if ~} j\in[(\theta-1)t+1:\theta t],\\
  S(:,{\rm supp}(V(:,j-t))+K) & {\rm otherwise},
  \end{cases}\end{equation*}
  where for a subset $\Lambda$ of $[n]$, $K+\Lambda=\{K+i:i\in\Lambda\}$.
\item \textbf{Response Phase:} For $\theta\in[M],j\in[N]$,
the answer $A^{(j)}_\theta$ is determined by
\begin{align}
     A^{(j)}_\theta&= {\bf c}_jQ^{(j)}_\theta
  =\begin{cases}\{c_{j,\sigma(x+K)}:x\in{\rm supp}( V(:,j))\} & {\rm if ~} j\in[(\theta-1)t],\\
  \{c_{j,\sigma(x)}:x\in[K]\} & {\rm if ~} j\in[(\theta-1)t+1:\theta t],\\
  \{c_{j,\sigma(x+K)}:x\in{\rm supp}(V(:,j-t))\} & {\rm otherwise}.
  \end{cases} \label{eq28}
\end{align}

\end{itemize}

\vspace{-0.3em}
\noindent
\rule{1\linewidth}{1pt}
We show that the scheme in {\bf Construction 2} satisfies the correctness and privacy conditions as follows.
\begin{theorem}\label{thm13} The scheme in  {\bf Construction 2}  is a joint systematic MDS-coded PIR scheme for the case of $(M,N> K+t,K=Mt)$, and its retrieval rate is $ \frac{(N+K-t)}{NM}$. That is,  for the storage pattern $\mathcal{P}=\ell I_M\otimes {\bf 1}_t$,
\begin{equation*}\label{eq29}
C^{(s)}_{\mbox{\tiny\rm JMDS-PIR}}(M,N>K+t,K=Mt,\mathcal{P})\geq \frac{(N+K-t)}{NM}.
  \end{equation*}
\end{theorem}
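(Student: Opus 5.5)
Following the proof of Theorem~\ref{thm11}, I would check three things in turn: correctness, privacy, and the rate count. Fix $\theta$ and write $T=[(\theta-1)t+1:\theta t]$ for the $t$ systematic servers holding $W_\theta$. The other $N-t$ servers I relabel as $1,\dots,N-t$ in order, so that the server $j<(\theta-1)t+1$ is column $j$ of $V$ and the server $j>\theta t$ is column $j-t$. Since $V$ is a circulant of size $(N-t)\times(N-t)$ generated by ${\bf v}=({\bf 1}_K,{\bf 0}_{N-t-K})$, every row and every column of $V$ has Hamming weight exactly $K$. This requires $K\le N-t$, which holds because $N>K+t$.

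\textbf{Correctness.} The file $W_\theta$ consists of the symbols $c_{j,\sigma(x)}$ for $j\in T$ and $x\in[\ell]$, with $\ell=N+K-t$. For $x\in[K]$ the servers in $T$ return $c_{j,\sigma(x)}$ directly, so these are obtained at once. For each remaining index $x=K+y$ with $y\in[N-t]$, row $y$ of $V$ has exactly $K$ nonzero entries. Hence the user downloads $c_{j,\sigma(K+y)}$ from exactly $K$ of the relabelled non-$T$ servers. By \eqref{eq25}, $(c_{1,\sigma(K+y)},\dots,c_{N,\sigma(K+y)})$ is a codeword of the $[N,K]$ MDS code $\tilde{\mathcal{C}}$. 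Any $K$ of its coordinates therefore determine the whole codeword, and in particular the $t$ symbols $c_{j,\sigma(K+y)}$ for $j\in T$. Running over all $y$ recovers all of $W_\theta$. The one thing to confirm is that the index ranges match: $K+(N-t)=\ell$, so every coordinate of $W_\theta$ is covered exactly once.

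\textbf{Privacy.} Each server in $T$ returns $K$ symbols. Each server outside $T$ returns $|{\rm supp}(V(:,\cdot))|=K$ symbols, since every column of $V$ has weight $K$. The indices each server receives are the $\sigma$-image of a fixed $K$-subset of $[\ell]$. Because $\sigma$ is uniform on ${\rm S}_\ell$, this image is a uniformly random $K$-subset of $[\ell]$, and its distribution does not depend on $\theta$ or on $j$. So $(Q^{(j)}_\theta,A^{(j)}_\theta,{\bf c}_j,W_{[M]})$ has the same distribution for every $\theta$, which is exactly \eqref{eq6}.

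\textbf{Rate.} The total download is $D=NK$ symbols and the file size is $L=t(N+K-t)$. With $K=Mt$ this gives $R_{\rm PIR}=\frac{t(N+K-t)}{NK}=\frac{N+K-t}{NM}$. The lower bound on $C^{(s)}_{\mbox{\tiny\rm JMDS-PIR}}$ then follows from the definition of capacity as a supremum over schemes.

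The only step needing real care is the index bookkeeping for the non-$T$ servers. One must check that the relabelling $j\mapsto j$ or $j\mapsto j-t$ maps $[N]\setminus T$ bijectively onto the $N-t$ columns of $V$. One must also check that the row weights, not the column weights, of $V$ are what give $K$ coded symbols for each $x>K$. Both hold because $V$ is circulant.
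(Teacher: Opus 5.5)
Your proposal is correct and follows the paper's proof essentially step for step. Correctness comes from the row weights of the circulant $V$: each $x\in[K+1:\ell]$ yields $K$ coordinates of a $\tilde{\mathcal{C}}$-codeword under the relabelling $j\mapsto j$ or $j\mapsto j-t$. Privacy comes from the column weights, since every server returns $K$ of its $\ell$ symbols at $\sigma$-uniformly random positions, and the rate is $\frac{t(N+K-t)}{NK}=\frac{N+K-t}{NM}$. Your explicit checks, that the relabelling is a bijection onto the $N-t$ columns and that the $\sigma$-image of a fixed $K$-subset is uniform, only make precise what the paper states informally.
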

\begin{proof} We first prove the correctness of the scheme in {\bf Construction 2}.  According to \eqref{eq28}, the user can directly  download the desired symbols  $\{c_{j,\sigma(x)}:x\in[K]\}$ from the server ${\rm Serv}^{(j)},j\in[(\theta-1)t+1:\theta t]$.  By \eqref{eq25},  a sufficient condition  for recovering the desired symbols $\{c_{(\theta-1)t+i,\sigma(x)}:i\in[t],x\in[K+1:\ell]\}$ is that the user needs to download $K$ symbols of codeword $(c_{1,\sigma(x)},c_{2,\sigma(x)},\cdots,c_{N,\sigma(x)})$ for each $x\in[K+1:\ell]$. Actually, for each $x\in[K+1:\ell]$, the user can  directly download symbols  $$A_x\triangleq\{c_{j,\sigma(x)}:j\in[N]\setminus[(\theta-1)t+1:\theta t],x-K\in {\rm supp}(V(:,\varphi(j)))\},$$
 where $\varphi(j)=j$ for $j\in[(\theta-1)t]$ and $\varphi(j)=j-t$ for $j\in[\theta t+1:N]$.
 Note that $x-K\in {\rm supp}(V(:,\varphi(j)))$ for $x\in[K+1:\ell]$ means that  $V(x-K,\varphi(j))\neq 0$, hence  the size of $A_x$ for $x\in[K+1:\ell]$ is equal to the Hamming weight of the $(x-K)$th row of $V$,
i.e.,
$
|A_x|
=|{\rm supp}(V(x-K,:))|=|{\rm supp}({\bf v}P^{x-K-1})|=|{\rm supp}({\bf v})|=K.$
Hence, the correctness of the scheme in {\bf Construction 2}  is guaranteed.

As to the privacy, it is sufficient to show that each server only knows that the user uniformly and randomly downloaded $K$ symbols  from its stored $\ell$ symbols. By \eqref{eq28},  for server ${\rm Serv}^{(j)},j\in[(\theta-1)t+1:\theta t]$, the answers $A^{(j)}_\theta$ contains $K$ symbols, and for $j\in[N]\setminus [(\theta-1)t+1:\theta t]$, the answers $A^{(j)}_\theta$ contains $|supp(V(:,\varphi(j)))|=K$ symbols,
which is because that
by the definition of $V$, one can find that the Hamming weight of each column in $V$ is equal to  that of each row of $V$, i.e., $|supp(V(:,i))|=|supp({\bf v})|=K$.
Therefore, no matter which file is retrieved, each server only knows that the user uniformly and randomly downloaded $K$ symbols  from its stored $\ell$ symbols. So, the privacy of this scheme is guaranteed.

Moreover, such a scheme downloads a total of $D=KN$ symbols, so its  retrieval rate is $R_{\rm PIR}=\frac{L}{D}=\frac{t(N+K-t)}{NK}=\frac{(N+K-t)}{NM}$. Hence, the desired lower bound on the joint MDS-coded PIR is obtained.
\end{proof}

\subsection{Construction for $(M,N\leq K+t,K=Mt+1)$}
In this subsection, we construct a joint systematic MDS-coded PIR scheme   for $(M,N\leq K+t,K=Mt+1)$, whose retrieval rate is strictly higher than the capacity of the separate MDS-coded PIR. To illustrate the main idea, we begin with an example.
\begin{example}\label{ex4} Let $N=7,K=5,M=2,L=5, \ell=2$. Suppose $W_1=(a_{1,1},a_{1,2},a_{2,1},a_{2,2},a_{3,1})$, $
W_2=(b_{1,1},b_{1,2},b_{2,1},b_{2,2},b_{3,1})\in\mathbb{F}^5_q$ with $q\geq 7$, and $\tilde{G}$ is a  systematic generator matrix of a
$[7,5]$ MDS code over $\mathbb{F}_q$. Suppose ${\bf c}_1=(a_{1,1},a_{1,2}),{\bf c}_2=(a_{2,1},a_{2,2}),{\bf c}_3=(b_{1,1},b_{1,2}),{\bf c}_4=(b_{2,1},b_{2,2}), {\bf c}_5=(a_{3,1},b_{3,1})$, and then $({\bf c}_1,\cdots,{\bf c}_7)=({\bf c}_1,\cdots,{\bf c}_5)\tilde{G}\otimes I_2.$
\begin{figure}[ht]
	\centering
	\setlength{\abovecaptionskip}{0.2cm}
	\setlength{\belowcaptionskip}{-0.2cm}
	\begin{minipage}[t]{0.48\textwidth}
		\centering
		\resizebox{\linewidth}{!}{
			\renewcommand{\arraystretch}{1.2}
			\begin{tabular}{|c|c|c|c|c|c|c|}
				\hline
				$\rm{Serv}^{(1)}$ & $\rm{Serv}^{(2)}$ & $\rm{Serv}^{(3)}$ & $\rm{Serv}^{(4)}$ & $\rm{Serv}^{(5)}$ & $\rm{Serv}^{(6)}$ & $\rm{Serv}^{(7)}$\\\hline
				$a_{1,\sigma(1)}$& $a_{2,\sigma(1)}$&&& $a_{3,1}$&&\\ 
				& &$b_{1,\sigma(2)}$& $b_{2,\sigma(2)}$& $b_{3,1}$& $c_{\sigma(2)}$ & $d_{\sigma(2)}$\\
				\hline
			\end{tabular}
		}
		\caption{Answers for retrieving $W_1$, where $\sigma\sim U({\rm S}_2)$}
		\label{fg03}
	\end{minipage}\hfill
	\begin{minipage}[t]{0.48\textwidth}
		\centering
		\resizebox{\linewidth}{!}{
			\renewcommand{\arraystretch}{1.2}
			\begin{tabular}{|c|c|c|c|c|c|c|}
				\hline
				$\rm{Serv}^{(1)}$& $\rm{Serv}^{(2)}$ & $\rm{Serv}^{(3)}$ & $\rm{Serv}^{(4)}$ & $\rm{Serv}^{(5)}$ & $\rm{Serv}^{(6)}$ & $\rm{Serv}^{(7)}$\\
				\hline
				&&$b_{1,\tau(1)}$& $b_{2,\tau(1)}$& $a_{3,1}$&&\\ 
				$a_{1,\tau(2)}$& $a_{2,\tau(2)}$& && $b_{3,1}$& $c_{\tau(2)}$ & $d_{\tau(2)}$\\
				\hline
			\end{tabular}
		}
		\caption{Answers for retrieving $W_2$, where $\tau\sim U({\rm S}_2)$.}
		\label{fg04}
	\end{minipage}
	
\end{figure}
The PIR scheme is illustrated by the answers for retrieving $W_1$ in Figure   \ref{fg03} and for retrieving  $W_2$ in Figure \ref{fg04}, respectively.

Next, we explore the correctness and the privacy of the scheme. The correctness follows from the observation that one can directly obtain three desired symbols and the remaining $2$ desired symbols can be recovered from the other five symbols in a codeword, which is because that such $7$ symbols consist of a codeword of $[7, 5]$ MDS code. Moreover, the privacy of this scheme is guaranteed because that the data stored in ${\rm Serv}^{(5)}$ is fully downloaded and  no matter which file is desired, for each of the other servers, the answers are identically distributed.

Moreover, the desired file consists of $5$ symbols while the answers totally contain $8$ symbols, so the scheme has
rate $\frac{5}{8}=\frac{7-2}{7+1}$, which is strictly higher than the capacity of the separate MDS-coded PIR  $C_\oplus(2,7,5)=\frac{1}{1+5/7}=\frac{7}{12}$.
\end{example}

Now we show a formal description of the general scheme.

\noindent
\rule{\linewidth}{1pt}
\vspace{-0.3em}
\noindent
\textbf{Construction 3: Joint  Systematic MDS-coded PIR Scheme for $(M,N\leq K+t, K=Mt+1)$}

\vspace{-0.3em}
\noindent
\rule{\linewidth}{0.5pt}
Suppose the storage pattern $\mathcal{P}=\frac{L}{K}(MI_M\otimes {\bf 1}_t~{\bf 1}_M^{\intercal})$, they are publicly known.
The goal is to privately retrieve the file $W_\theta$ for any $\theta\in[M]$.
\begin{itemize}
\item \textbf{Storage System:}
The joint storage  strategy is presented as follows.
Suppose $W_{i}=({\bf c}_{(i-1)t+1},\cdots, {\bf c}_{it},{\bf W}_{i,K})\in \mathbb{F}_q^{L}$ for $i\in[M]$ and ${\bf c}_K=({\bf W}_{1,K},{\bf W}_{2,K},\cdots,{\bf W}_{M,K})\in\mathbb{F}_q^\ell$. The server ${\rm Serv}^{(j)},j\in[N]$ stores the  data ${\bf c}_j$  which is coded by a systematic generator matrix $\tilde{G}$ of an
$[N,K]$ MDS code $\tilde{C}$ over $\mathbb{F}_q$ with $q\geq N$, that is,
 $({\bf c}_1,\cdots,{\bf c}_N)=({\bf c}_1,\cdots,{\bf c}_K)\tilde{G}\otimes I_{\ell}.$
 According to the above encoding process, we know that each row $(c_{1,j},c_{2,j},\cdots,c_{N,j}), j\in[\ell]$ is a codeword of the MDS code $\tilde{C}$.
 \item \textbf{Preparation:}
 Let $L=\frac{K\ell}{M}$,   $\ell=\mu(N-K+t)$, and $\mu=\frac{M}{\gcd(N-K+t,M)}$.  Moreover, let $P=({\bf u}^\intercal_{2},{\bf u}^\intercal_{3},\cdots,{\bf u}^\intercal_{t},{\bf u}^\intercal_{1})\in\mathbb{F}_q^{t\times t}$ and ${\bf v}=({\bf 1}_{K+t-N},{\bf 0}_{N-K})\in \mathbb{F}_q^t$, define $V=({\bf v}^\intercal~
 ({\bf v}P)^\intercal ~\cdots~ ({\bf v}P^{t-1})^\intercal)^\intercal,$  where ${\bf u}^\intercal_{j}$ is the $j$th unit vector in $\mathbb{F}_q^t$.
 \item \textbf {Query Phase:}
 The user  privately and uniformly chooses  a permutation $\sigma $  from  the symmetric group ${\rm S}_\ell$,  and define the random matrix  $S=({\bf e}^\intercal_{\sigma(1)},{\bf e}^\intercal_{\sigma(2)},\cdots,{\bf e}^\intercal_{\sigma(\ell)})\in\mathbb{F}_q^{\ell\times \ell}$, where ${\bf e}_i$ is the $i$th unit vector in $\mathbb{F}_q^\ell$.
 Then, he uses the random matrix $S$ and public matrix $V$ to construct the queries as follows:
  for $j\in[N]$,{\small\begin{equation}\label{eq30}
  Q^{(j)}_\theta= \begin{cases}
  I_\ell  & {\rm if ~} j=K,\\
  S(:,\cup_{i=0}^{\mu-1}(it+{\rm supp}(V(:,j')))\cup[\mu t+1:\ell]) & {\rm if ~} j=(\theta-1)t+j',j'\in[ t],\\
  S(:,[\mu t]) & {\rm otherwise}.
\end{cases}\end{equation}}
\item \textbf{Response Phase:} For $\theta\in[M],j\in[N]$,
the answer $A^{(j)}_\theta$ is determined by
{\small\begin{equation}\label{eq30'}
  A^{(j)}_\theta
  =\begin{cases}
  {\bf c}_K  & {\rm if ~} j=K,\\
  \cup^{\mu-1}_{i=0}\{c_{j,\sigma(x+it)}:x\in {\rm supp}(V(:,j'))\}\cup\{c_{j,\sigma(x)}:x\in[\mu t+1:\ell]\} & {\rm if ~} j=(\theta-1)t+j',j'\in[t],\\
  \{c_{j,\sigma(x)}:x\in[\mu t]\} & {\rm otherwise}.
  \end{cases}
\end{equation}}
\end{itemize}
\noindent
\rule{\linewidth}{1pt}

 We first explain the privacy of the scheme in {\bf Construction 3}. One can find that $A^{(K)}_\theta={\bf c}_K$, which implies that the ${\rm Serv}^{(K)}$ gets no information of $\theta$. By \eqref{eq30} and \eqref{eq30'}, one can know that for $j\in[(\theta-1)t+1:\theta t]$, the user uniformly and randomly downloads $\mu |{\rm supp}(V(:,j-(\theta-1)t))|+\ell-\mu t=\mu (K-N+t)+\mu(N-K)=\mu t$ symbols from ${\rm Serv}^{(j)}$, so these servers also get no information of $
  \theta$. Moreover, for the remaining servers, one can find that the number of  symbols downloaded  uniformly and randomly from  each of them is  $\mu t$, which implies that  each  of them  also gets no information of $\theta$. So, the privacy of this scheme is guaranteed.

  Next, let us explain the correctness of the scheme in {\bf Construction 3}. By \eqref{eq30'}, the user can directly download  the desired symbols $\{c_{j,\sigma(x)} : j\in[(\theta-1)t+1:\theta t],x\in [\mu t+1:\ell]\}\cup\{{\bf W}_{\theta,K}\}$ for $j\in[(\theta-1)t+1:\theta t]$. Hence, the user only needs to recover the desired symbols $\{\{c_{j,\sigma(x)} : j\in[(\theta-1)t+1:\theta t],x\in [\mu t]\}\}$. According to the fact that  $(c_{1,\sigma(x)},c_{2,\sigma(x)},\cdots, c_{N,\sigma(x)})$  for $x\in[\mu t]$ is a  codeword of MDS code, it is sufficient to show that the user can download  $K$ symbols of such codewords. Actually, for each $x\in [\mu t]$ with $x=it+x',i\in[0:\mu-1],x'\in [t]$, the user directly downloads $N-t$ symbols from the servers labeled by $[N] \setminus [(\theta-1)t+1:\theta t]$  and downloads
  $$|\{c_{(\theta-1)t+j',\sigma(x)}:x'\in {\rm supp}(V(:,j')),j'\in[t]\}|=|{\rm supp}(V(x',:))|= |{\rm supp}({\bf v})|=K-N+t
$$
  symbols from the servers  labeled by $[(\theta-1)t+1:\theta t]$, that is, the user downloads $N-t+K-N+t=K$ symbols of the codeword $(c_{1,\sigma(x)},c_{2,\sigma(x)},\cdots, c_{N,\sigma(x)})$ for $x\in[\mu t]$. Hence, the correctness of this scheme is guaranteed.

 Moreover, such a scheme downloads a total of $D=(N-1)\mu t+\ell=\mu ((N-1)t+N-K+t)$ symbols, so its  retrieval rate $R_{\rm PIR}=\frac{L}{D}=\frac{\mu \frac{K (N-K+t)} {M}}{\mu ((N-1)t+N-K+t) }=\frac{K(N-K+t)}{M((t+1)N-K)}$, which is  always strictly higher than the capacity of separate
MDS-coded PIR schemes  by Proposition~2(i).

\subsection{Comparison}
In this subsection, we compare the proposed joint MDS-coded PIR schemes with known MDS-coded PIR schemes in terms of the {\it retrieval rate $R_{\rm PIR}$}, as summarized in Table \ref{tab11}.

 \begin{proposition}\label{lem15}For $K=Mt, 0<r=\frac{K}{N}<1, M\geq 2$, let $f_M(r)\triangleq\begin{cases}
 1-(1-\frac{1}{M})r-\frac{1-r}{1-r^M}, & {\rm if ~}\frac{M}{M+1}\leq r< 1\\
 \frac{M+(M-1)r}{M^2}-\frac{1-r}{1-r^M}, & {\rm otherwise }
 \end{cases},$ and $\Delta_M(r)\triangleq\frac{f_M(r)}{\frac{1-r}{1-r^M}}
    =(1-\frac{1}{M})(1-r^M)+\frac{1-r^M}{M(1-r)}-1$ for $\frac{M}{M+1}\leq r< 1$.
 Then,
 \begin{itemize}
 \item [(i)] For $\frac{M}{M+1}\leq r\leq 1$ with $M\geq 2$, $\Delta_M(r)$ is  monotonically decreasing with respect to $r$,  and
     \begin{equation}\label{eq32}
     \lim_{M\to\infty} \max_{{M}/(M+1)\leq r\leq 1}\Delta_M(r)=1-\frac{2}{\it e}\approx 26.42\%.
     \end{equation}
     \item [(ii)] For $0<r<\frac{M}{M+1}$,  $f_M (r)$ is monotonically increasing.
And there exists a unique $\frac{M}{M+1}>r_M>0$ such that  $f_M(r_M)=0$, and $f_M(r)>0$ for $r_M<r< 1$.
 \end{itemize}
 \end{proposition}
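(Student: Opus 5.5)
The plan is to prove (ii) from (i). For (i) we will differentiate $\Delta_M$ and evaluate it at the left endpoint $r=\frac{M}{M+1}$.

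\emph{Step 1: simplify $\Delta_M$.} Using $\frac{1-r^M}{1-r}=\sum_{k=0}^{M-1}r^k$ we write $\Delta_M(r)=(1-\frac1M)(1-r^M)+\frac1M\sum_{k=0}^{M-1}r^k-1$. Differentiating gives
\[
\Delta_M'(r)=-(M-1)r^{M-1}+\frac1M\sum_{k=1}^{M-1}k\,r^{k-1}.
\]
Dividing by $r^{M-1}>0$, monotone decrease is equivalent to $\sum_{k=1}^{M-1}k\,r^{-(M-k)}\le M(M-1)$ on $[\frac{M}{M+1},1]$. The left side is decreasing in $r$, so it suffices to check the inequality at $r=\frac{M}{M+1}$. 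There $r^{-(M-k)}=(1+\frac1M)^{M-k}\le e^{1-k/M}$.

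\emph{Step 2 (main obstacle): bound the resulting sum.} We need $\sum_{k=1}^{M-1}k\,e^{1-k/M}\le M(M-1)$ for every $M\ge2$. Heuristically the left side is about $M^2\int_0^1xe^{1-x}\,dx=(e-2)M^2\approx0.72M^2$, which is why the inequality should hold. To make this rigorous we first try convexity: $e^{1-x}\le 1+(e-1)(1-x)$ on $[0,1]$. This turns the sum into explicit power sums in $k$, and the inequality reduces to a polynomial inequality in $M$. If that polynomial inequality fails for a few small $M$, we verify those cases directly using the exact values $(1+\frac1M)^{M-k}$ rather than the exponential bound.

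\emph{Step 3: evaluate the maximum and its limit.} By monotonicity the maximum over $[\frac{M}{M+1},1]$ is attained at $r=\frac{M}{M+1}$. Put $\rho_M=(\frac{M}{M+1})^M$ and use $\frac1{1-r}=M+1$ there:
\[
\Delta_M\Bigl(\tfrac{M}{M+1}\Bigr)=\Bigl(1-\tfrac1M\Bigr)(1-\rho_M)+\tfrac{M+1}{M}(1-\rho_M)-1=2(1-\rho_M)-1.
\]
Since $\rho_M\to e^{-1}$, this tends to $2(1-e^{-1})-1=1-\frac2e$, which gives \eqref{eq32}. Also $\Delta_M(r)\to0$ as $r\to1^-$. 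Hence the decreasing function $\Delta_M$ is positive on $[\frac{M}{M+1},1)$, so $f_M>0$ on that interval.

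\emph{Step 4: part (ii).} On $(0,\frac{M}{M+1})$ the first term $\frac{M+(M-1)r}{M^2}$ is increasing. The subtracted term $\frac{1-r}{1-r^M}=\bigl(\sum_{k=0}^{M-1}r^k\bigr)^{-1}$ is decreasing. Therefore $f_M$ is strictly increasing.

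As $r\to0^+$, $f_M(r)\to\frac1M-1<0$. At $r=\frac{M}{M+1}$ both branches of $f_M$ equal $\frac{2}{M+1}-\frac{1-r}{1-r^M}$, because $1-(1-\frac1M)\frac{M}{M+1}=\frac{2}{M+1}=\frac{M+(M-1)\frac{M}{M+1}}{M^2}$. So $f_M$ is continuous there, and by Step 3 its value is positive.

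The intermediate value theorem together with strict monotonicity gives a unique $r_M\in(0,\frac{M}{M+1})$ with $f_M(r_M)=0$ and $f_M>0$ on $(r_M,\frac{M}{M+1})$. Combining with $f_M>0$ on $[\frac{M}{M+1},1)$ from Step 3 yields $f_M>0$ on $(r_M,1)$.
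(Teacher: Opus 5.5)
Your proof is correct, and it takes a different route from the paper's in both parts.

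\textbf{Part (i).} The paper keeps $\Delta_M$ in rational form. It writes $\Delta_M'(r)=F(r)/(M(1-r)^2)$ with $F(r)=1+(M-1)r^M-Mr^{M-1}-M(M-1)r^{M-1}(1-r)^2$. It then observes that $F'(r)=-M(M-1)r^{M-2}(1-r)(M-(M+1)r)\ge 0$ on $[\frac{M}{M+1},1]$ and that $F(1)=0$. This is exact algebra, anchored at $r=1$ where equality holds, and it needs no numerical estimates.

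You instead expand the geometric sum and divide by $r^{M-1}$, which gives a quantity monotone in $r$. You then anchor at the opposite endpoint $r=\frac{M}{M+1}$, and this costs you a quantitative estimate. The estimate closes for every $M\ge 2$ with no exceptional small cases. The chord bound gives $\sum_{k=1}^{M-1}k\,e^{1-k/M}\le \frac{M(M-1)}{2}+(e-1)\frac{(M-1)(M+1)}{6}$. This is at most $M(M-1)$ exactly when $(e-1)(M+1)\le 3M$, i.e.\ when $M\ge \frac{e-1}{4-e}\approx 1.34$.

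State strictness explicitly. The map $r\mapsto\sum_k k\,r^{-(M-k)}$ is strictly decreasing, so $\Delta_M'<0$ on $(\frac{M}{M+1},1]$. You need this strict decrease to conclude $\Delta_M>0$ on $[\frac{M}{M+1},1)$ from $\Delta_M(1)=0$.

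\textbf{Part (ii).} Here your route is substantially simpler. The paper computes $M^2(r^M-1)^2f_M'(r)$, shows it is decreasing in $r$, and lower-bounds it at $r=\frac{M}{M+1}$. That bound uses $(\frac{M}{M+1})^{M+1}<e^{-1}$ and $1-\frac{8}{3e}>0$, with $M=2$ treated separately. You simply note that $\frac{1-r}{1-r^M}=(\sum_{k=0}^{M-1}r^k)^{-1}$ is decreasing, so $f_M$ is a sum of two increasing functions.

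Your explicit check that the two branches of $f_M$ agree at $r=\frac{M}{M+1}$ is also worth keeping. The paper leaves this continuity implicit when it invokes the intermediate value theorem.
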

\begin{proof} The proof of the proposition is divided into two parts.
 \begin{itemize}
 \item[(i)]We first prove that $\Delta_M(r)$ is monotonically decreasing for $\frac{M}{M+1}\leq r<1$. For $\frac{M}{M+1}< r < 1$, \begin{align*}
    \frac{{\rm d}\Delta_M(r)}{{\rm d} r}
    =\frac{1+(M-1)r^{M}-Mr^{M-1}-M(M-1)r^{M-1}(1-r)^2)}{M(1-r)^2},
    \end{align*} hence it is sufficient to show  that  for $\frac{M}{M+1}\leq r \leq 1$, $F(r)\triangleq1+(M-1)r^{M}-Mr^{M-1}-M(M-1)r^{M-1}(1-r)^2)<0$. One can find that $
    \frac{{\rm d}F(r)}{{\rm d} r}=-M(M-1)r^{M-2}(1-r)(M-(M+1)r)\geq 0$, i.e., $F(r)$ is monotonically increasing  for $\frac{M}{M+1}\leq r \leq 1$.
     Hence, it always has that $F(r)<F(1)=0$ for $\frac{M}{M+1}\leq r< 1$, that is, $\frac{{\rm d}\Delta_M(r)}{{\rm d} r}=\frac{F(r)}{M(1-r)^2}<0.$
    So, $\max_{\frac{M}{M+1}\leq r\leq 1,}\Delta_M(r)=\Delta_M(\frac{M}{M+1})=2(1-(\frac{M}{M+1})^M)-1$, and
 $\lim_{M\to\infty}\max_{\frac{M}{M+1}\leq r\leq 1,}\Delta_M(r)=1-\frac{2}{e}.$
\item[(ii)] By (i), we have that $f_M(r)> 0$ for $\frac{M}{M+1}\leq r\leq 1$. Next we prove that for  $0<r< \frac{M}{M+1}$, $f_M(r)$ is monotonically increasing. For $M=2$, $f_2(r)=\frac{2+r}{4}-\frac{1}{1+r}$, which always holds.  For  $0<r< \frac{M}{M+1},M\geq 3$, it has that \begin{align*}
    M^2(r^M-1)^2\frac{{\rm d}f_M(r)}{{\rm d} r}
    =&(M-1)(r^{M}-1)^2+r^{M-1}((M^3-M^2)r-M^3)+M^2\\
    \stackrel{(a)}{\geq}&(M-1)( \frac{M}{M+1})^{2M}-2(M^2+M-1)(\frac{M}{M+1})^M+(M^2+M-1)\\
    =&(M-1)(\frac{M+1}{M})^2( \frac{M}{M+1})^{2(M+1)}-2(M^2+M-1)(\frac{M+1}{M})(\frac{M}{M+1})^{M+1}+(M^2+M-1)\\
    \stackrel{(b)}{\geq}& (M-1)(\frac{M+1}{eM})^2+(M^2+M-1)(1-2\frac{M+1}{eM})
    >(M^2+M-1)(1-2\frac{M+1}{eM})\\
    \geq &(M^2+M-1)(1-\frac{8}{3e})>0,
    \end{align*}
    where  $(a)$ follows from that for $0<r\leq \frac{M}{M+1}$, $\frac{{\rm d}((M^2(r^M-1)^2))\frac{{\rm d}f_M(r)}{{\rm d} r}}{{\rm d} r}=-M(M-1)r^{M-2}(2r(1-r^M)+M^2(1-r))<0,$ and $(b)$ is due to that $(\frac{M}{M+1})^{M+1}<\frac{1}{e}$ and $(M-1)(\frac{M+1}{M})^2x^2-2(M^2+M-1)\frac{M+1}{M}x+(M^2+M-1)$ is monotonically decreasing for $x<\frac{(M^2+M-1)M}{M^2-1}$.
Moreover, $\lim_{r\to 0^{+}}f_M(r)=\frac{1}{M}-1<0$ and $f_M(\frac{M}{M+1})>0$. By the intermediate value property and  monotonicity of $f_M(r)$, there exists a unique root of $f_M(r)$ for $r\in (0,\frac{M}{M+1})$, denoted by $r_M$, and $f_M(r)>0$ for $r_M<r<\frac{M}{M+1}$.
\end{itemize}
\end{proof}

 \begin{proposition}\label{lem16}
 For $K=Mt+1, K< N\leq K+t$ and $M\geq 2$,  let
 $g_{M}(N,t)\triangleq\frac{K(N-K+t)}{M((t+1)N-K)}-\frac{1-K/N}{1-(K/N)^M}$
 and $\delta_M(N,t)\triangleq\frac{g_M(N,t)}{\frac{1-K/N}{1-(K/N)^M}}$. Then,
 \begin{itemize}
 \item [(i)] $g_{M}(N,t)> 0.$
     \item [(ii)] Let $a\mid t$ and $N=K+\frac{t}{a}$, it has that $\lim_{M\to\infty}\lim_{t\to\infty}\delta_{M}(N,t)=a-(1+a)e^{-1/a}\leq 1-\frac{2}{e}\approx 26.42\%.$
 \end{itemize}
 \end{proposition}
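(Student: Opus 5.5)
The plan for both parts is to work with the storage rate $r=K/N$ and the gap $s=N-K$, where $1\le s\le t$ by hypothesis, and to use $\frac{1-r^M}{1-r}=\sum_{j=0}^{M-1}r^j\triangleq S$.

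\textbf{Part (i).} The claim $g_M(N,t)>0$ is equivalent to $K(N-K+t)\,S>M\big((t+1)N-K\big)$. Divide by $N$ and use $1-r=s/N$. This rewrites the target as
\[
r(s+t)\,S>M\Big(t+\frac{s}{N}\Big).
\]
First apply AM--GM to the geometric sum: $S\ge M r^{(M-1)/2}$, which is strict because $r<1$. It then suffices to show $r^{(M+1)/2}(s+t)>t+s/N$. Next apply Bernoulli's inequality with exponent $(M+1)/2\ge 3/2>1$:
\[
r^{(M+1)/2}\ge 1-\frac{(M+1)s}{2N}.
\]
Substituting reduces the claim to $s\big[1-\frac{(M+1)(s+t)+2}{2N}\big]\ge 0$. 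Using $N=Mt+1+s$, the bracket is nonnegative if and only if $2Mt+2s\ge (M+1)(s+t)$, i.e.\ $(M-1)(t-s)\ge 0$. This holds since $s\le t$. Strictness of the overall inequality comes from the strict AM--GM step.

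The step that needs care is choosing where to apply these elementary bounds so that they chain together exactly to the hypothesis $N\le K+t$. The AM--GM followed by Bernoulli route above does this, and the boundary case $s=t$ is precisely where Bernoulli becomes tight.

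\textbf{Part (ii).} Put $N=K+t/a$ with $K=Mt+1$ and let $t\to\infty$ with $M$ fixed. Then $r\to \frac{aM}{aM+1}$, and the rate of Construction~3 satisfies
\[
\frac{K(t/a+t)}{M\big((t+1)N-K\big)}\to\frac{1+1/a}{M+1/a}=\frac{a+1}{aM+1}.
\]
Also $1-r\to\frac{1}{aM+1}$. Therefore
\[
\lim_{t\to\infty}\delta_M=(a+1)\Big(1-\big(\tfrac{aM}{aM+1}\big)^M\Big)-1.
\]
Letting $M\to\infty$ gives $\big(1-\frac{1}{aM+1}\big)^M\to e^{-1/a}$, which yields the stated limit $a-(1+a)e^{-1/a}$.

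For the bound, define $h(a)=a-(1+a)e^{-1/a}$ for $a\ge1$ and set $x=1/a\in(0,1]$. Then $h'(a)=1-e^{-x}(1+x+x^2)$. For $x\in(0,1]$,
\[
e^x\le 1+x+x^2\Big(\tfrac12+\tfrac16+\cdots\Big)=1+x+(e-2)x^2<1+x+x^2,
\]
so $h'<0$. Hence $h(a)\le h(1)=1-2/e$.
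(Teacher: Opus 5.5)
Your proof is correct. Part (ii) follows the paper's route, but part (i) is genuinely different.

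In part (i), both arguments reduce $g_M(N,t)>0$ to the same inequality. The paper's form $S>M\frac{t+1-r}{K(1-r)+tr}$ is exactly yours, since $K(1-r)+tr=r(s+t)$ and $t+1-r=t+s/N$. The paper bounds the geometric sum by its tangent line at $r=1$, namely $S\ge M-\binom{M}{2}(1-r)$. After multiplying by $r$ this reads $rS\ge M-\binom{M+1}{2}(1-r)+\binom{M}{2}(1-r)^2$, and the paper then verifies the resulting rational inequality by direct algebra. Your AM--GM plus Bernoulli chain yields exactly $rS>M-\binom{M+1}{2}(1-r)$, which is the same bound without the quadratic term. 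That is why your reduction ends at $(M-1)(t-s)\ge 0$ with zero slack at $s=t$, whereas the paper keeps a positive margin of order $(1-r)^2$ there. Your route uses only textbook inequalities and gives strictness cleanly for every $M\ge 2$. By contrast, for $M=2$ the paper's tangent-line bound holds with equality, so its strictness rests entirely on that margin.

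One correction to your commentary: at $s=t$ it is the final linear inequality that becomes tight, not Bernoulli. Bernoulli is strict here, since the exponent $(M+1)/2$ exceeds $1$ and $s\ge 1$. So you actually have two independent sources of strictness.

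In part (ii) you improve on the paper in two small ways. First, you prove $h'<0$ via $e^x\le 1+x+(e-2)x^2$, which the paper only asserts. Second, you take the $t\to\infty$ limit factor by factor rather than through a closed form. This is the safer choice: the closed form the paper displays for $\delta_M(K+t/a,t)$ disagrees with the definition at finite $t$. With the parameters of Example~\ref{ex4} ($M=2$, $t=2$, $a=1$), the definition gives $\frac{1}{14}$ while the displayed formula gives $\frac{9}{49}$, although both have the same limit as $t\to\infty$.
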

 \begin{proof} \begin{itemize}
\item [(i)]
It is equivalent to show that  $\frac{M((t+1)N-K)}{K(N-K+t)} < \frac{1-r^M}{1-r}=\sum_{i=0}^{M-1}r^i,$ where $r=\frac{K}{N}$.
 Note that for $0<r<1$, $\frac{{\rm d}(\sum_{i=0}^{M-1}r^i-M+\binom{M}{2}(1-r))}{{\rm d }r}=\sum^{M-1}_{i=1}i(r^{i-1}-1)<0,$
 thus, $\sum_{i=0}^{M-1}r^i> M-\binom{M}{2}(1-r).$
 Moreover, for $K<N\leq K+t$,
$
M-\binom{M}{2}(1-r)-M\frac{t+1-r}{K(1-r)+tr}
\geq \binom{M}{2}(1-r)^2\frac{(N-K+t)}{K(1-r)+tr}>0.
$
Thus, for $K< N\leq K+t$, it has that
$
\frac{M((t+1)N-K)}{K(N-K+t)}=M\frac{t+1-r}{K(1-r)+tr}< M-\binom{M}{2}(1-r)< \frac{1-r^M}{1-r}.
$
\item [(ii)] Note that $\delta_M(K+\frac{t}{a},t)=(1+a)(1-\frac{1+\frac{1}{Mt}}{(1+\frac{1}{Ma}+\frac{1}{Mt})^{M}})-1$, thus \begin{align*}
&\lim_{M\to\infty}\lim_{t\to\infty}\delta_M(K+\frac{t}{a},t)
=\lim_{M\to\infty}(1+a)(1-\frac{1}{(1+\frac{1}{Ma})^{M}})-1=a-(1+a)e^{-a^{-1}}.
\end{align*}
Moreover, $\frac{{\rm d}(a-(1+a)e^{-a^{-1}})}{{\rm d}a}<0$ for $a\geq 1$, therefore $0<a-(1+a)e^{-1/a}\leq 1-\frac{2}{e}$.
\end{itemize}

\end{proof}
\begin{figure}[ht]
	\centering
	\setlength{\abovecaptionskip}{0.2cm}
	\setlength{\belowcaptionskip}{-0.2cm}
	
	\begin{minipage}[t]{0.45\textwidth} 
		\centering
		\includegraphics[width=\linewidth]{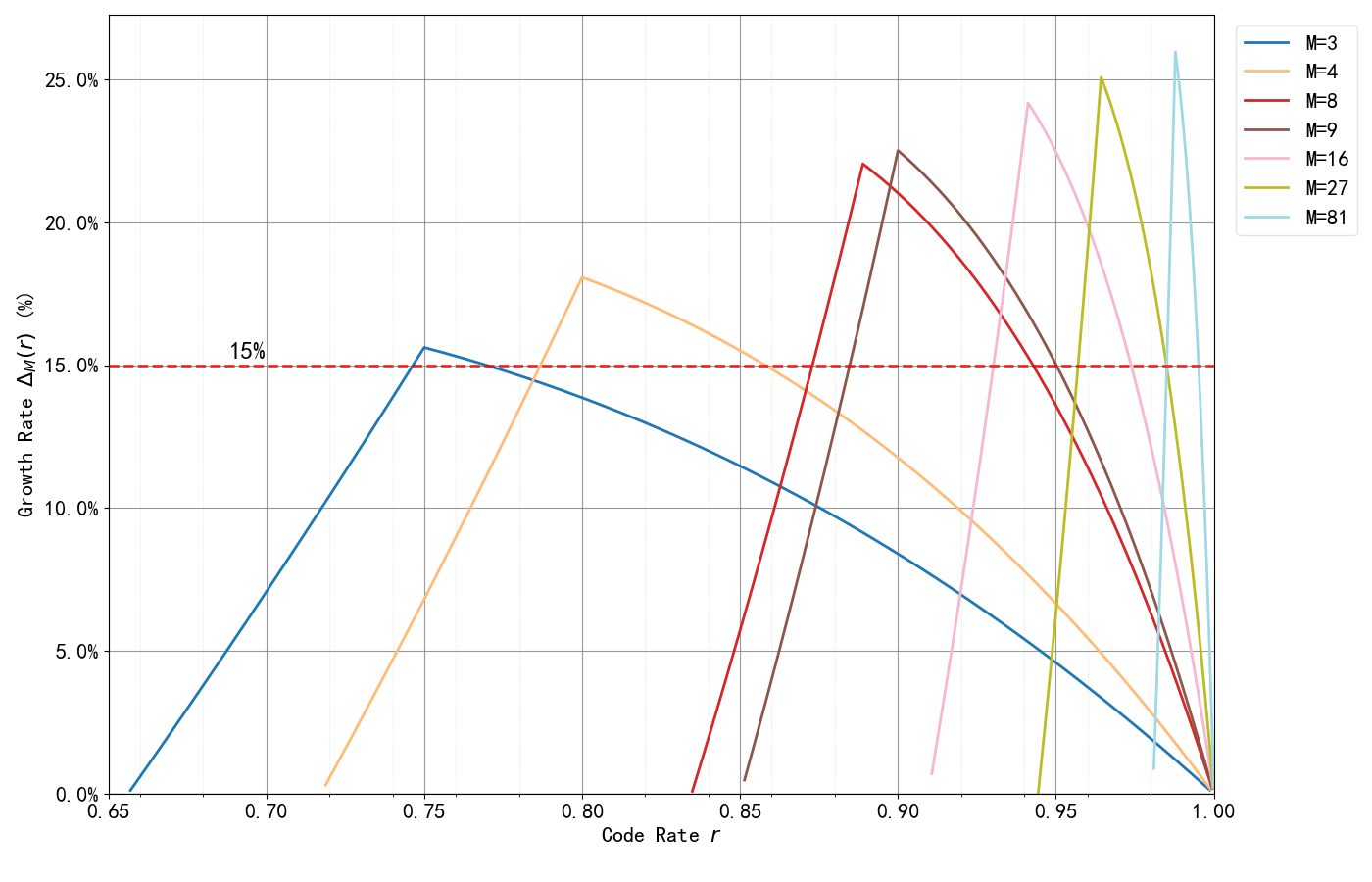}
		\caption{The growth rate $\Delta_M(r)$ for $K=Mt$ and different $M$.}
		\label{fig7}
	\end{minipage}
	\begin{minipage}[t]{0.45\textwidth} 
		\centering
		\includegraphics[width=\linewidth]{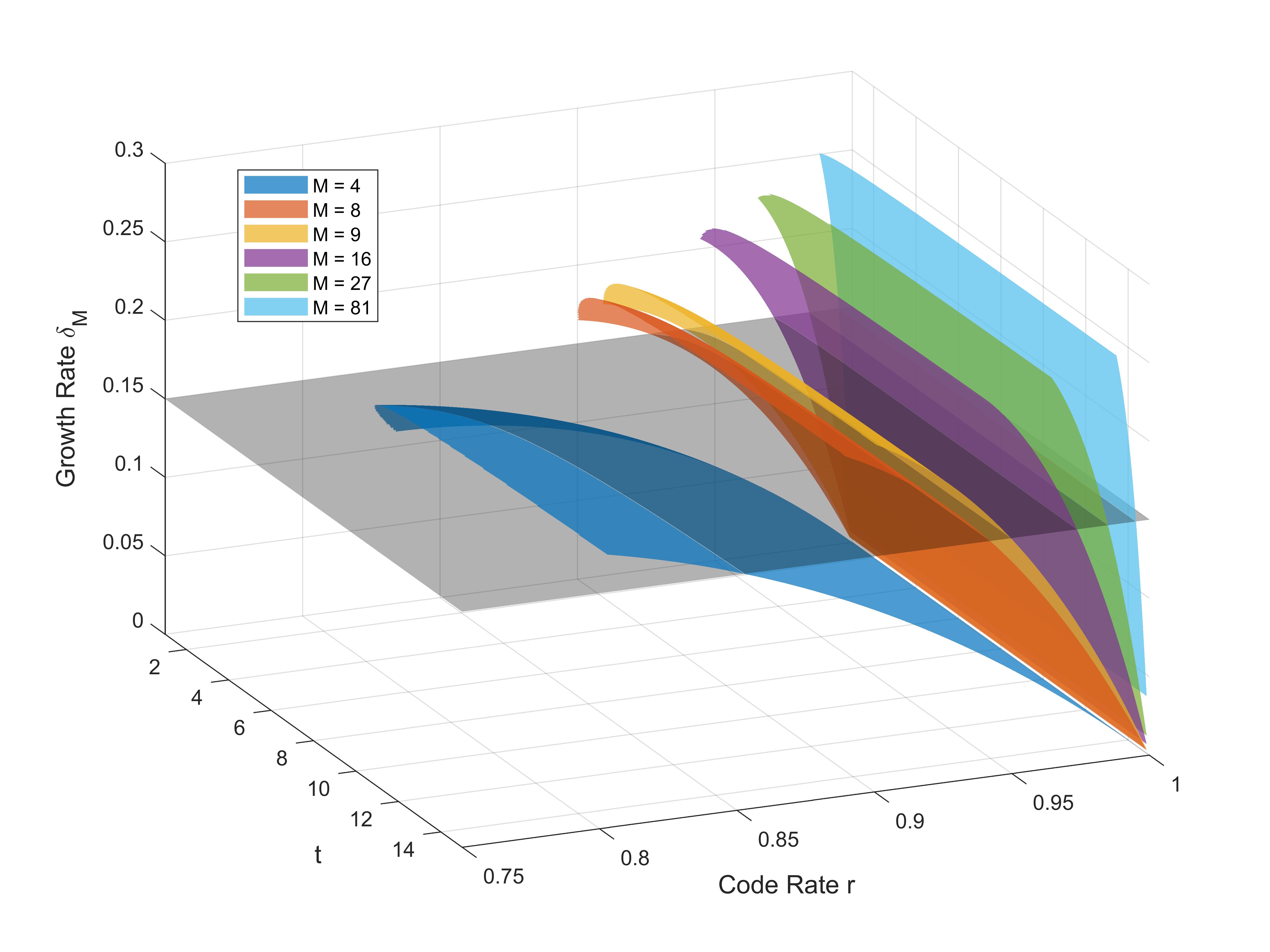}
		\caption{The growth rate $\delta_M(N,t)$ for $K=Mt+1$, $K< N\leq K+t$ and different $M$, where $\delta_M=\delta_M(\frac{Mt+1}{r},t)$ and $\frac{Mt+1}{(M+1)t+1}\leq r<1$.}
		\label{fig8}
	\end{minipage}
	
\end{figure}

\begin{remark}
The functions introduced in Propositions~\ref{lem15} and~\ref{lem16} quantify the performance gain of our joint MDS-coded PIR schemes over capacity-achieving separate MDS-coded PIR schemes. More precisely, for the case $K=Mt$, the function $f_M(r)$ measures the difference between the retrieval rate of our schemes and the separate MDS-coded PIR capacity $C_{\oplus}$ at the same storage code rate $r=K/N$, while $\Delta_M(r)$ measures the corresponding relative improvement. By Proposition~\ref{lem15}(ii), we have $\{0<r<1: f_M(r)>0\}=(r_M,1)$.
Therefore, the scheme in \textbf{Construction~1} for $K=Mt$ and $N\le K+t$ always achieves a retrieval rate strictly larger than $C_{\oplus}$, and the scheme in \textbf{Construction~2} for $K=Mt$ and $K+t<N<K/r_M$ also achieves a retrieval rate strictly larger than $C_{\oplus}$. Moreover, Proposition~\ref{lem15}(i) gives $\max_{\frac{M}{M+1}\le r<1}\Delta_M(r)
    =
    \Delta_M\left(\frac{M}{M+1}\right)
    =
    2\left(1-\left(\frac{M}{M+1}\right)^M\right)-1$, which is monotonically increasing with respect to $M$ and tends to $1-2/e\approx 26.42\%$ as $M$ increases.
For the case $K=Mt+1$, the functions $g_M(N,t)$ and $\delta_M(N,t)$ play the same roles: $g_M(N,t)$ measures the difference between the retrieval rate of \textbf{Construction~3} and $C_{\oplus}$, while $\delta_M(N,t)$ measures the corresponding relative improvement. By Proposition~\ref{lem16}(i), we have $g_M(N,t)>0$,
or equivalently, $R_{\rm PIR}
    =
    \frac{K(N-K+t)}{M((t+1)N-K)}
    >
    C_{\oplus}$,
for $K=Mt+1$ and $K<N\le K+t$. Hence, the scheme in \textbf{Construction~3} always outperforms capacity-achieving separate MDS-coded PIR schemes in this parameter regime. Furthermore, Proposition~\ref{lem16}(ii) shows that the relative improvement can asymptotically reach $1-2/e$. As illustrated in Figures~\ref{fig7} and~\ref{fig8}, the relative improvement of our schemes can exceed $15\%$ when $M\ge 4$, exceed $20\%$ when $M\ge9$, and approach $1-2/e\approx 26.42\%$ as $M$ grows.
\end{remark}

\section{Conclusion}
In this paper, we investigated joint MDS-coded PIR with systematic MDS array storage codes under prescribed storage patterns. We first introduce the storage pattern and MDS array codes to characterize the joint systematic MDS-coded storage strategy, and then clarify the capacity of the joint systematic  MDS-coded PIR model. Next, we derive upper bounds \eqref{eq14} and \eqref{eq22} on the capacity of the joint systematic  MDS-coded PIR schemes with the given storage pattern for the cases of $K=Mt$ and $K=Mt+1$, respectively. To determine the capacity, we construct three classes of joint MDS-coded PIR schemes for the cases $N\leq K+t, K=Mt$; $N>K+t, K=Mt$, and  $K=Mt+1$, respectively. In particular, the scheme for the case of $N\leq K+t,K=Mt$ matches the upper bound  \eqref{eq14}, which implies that the capacity of the joint systematic  MDS-coded PIR scheme for $N\leq K+t, K=Mt$ under the given storage pattern is fully characterized. We also showed that the proposed schemes achieve higher retrieval rates than capacity-achieving separate MDS-coded PIR schemes in the corresponding parameter regimes.
However, the parameter region of joint systematic  MDS-coded PIR schemes considered in this work is still relatively limited. Extending the results to more general joint MDS-coded PIR settings remains an interesting and challenging direction for future research.




\end{document}